\documentclass[11pt]{article}
\usepackage{fullpage}

\usepackage[colorlinks=true,linkcolor=blue]{hyperref}

\usepackage{verbatim}

\usepackage{amsthm}

\usepackage{babel}
\usepackage{amsmath,amsfonts,amssymb}
\usepackage{soul,thmtools}
\usepackage[capitalise]{cleveref}
\usepackage{complexity}
\usepackage{graphicx}
\usepackage[skip=10pt plus1pt, indent=10pt]{parskip}
\usepackage[    protrusion=true,
expansion=true,
final,
babel
]{microtype}
\usepackage{palatino}
\usepackage{tabularx}
\usepackage[textsize=small]{todonotes}
\usepackage[linesnumbered,ruled,vlined]{algorithm2e}
\usepackage{tikz}
\usetikzlibrary{fit, backgrounds, positioning}
\usetikzlibrary{shapes.geometric, arrows.meta,decorations.pathmorphing, decorations.markings}

\newcommand{\REACH}{\mathsf{Reach}}
\newcommand{\REACHUL}{\mathsf{ReachUL}}
\newcommand{\RUL}{\REACHUL}
\newcommand{\Rec}{\mathsf{Rec}}

\newcommand{\GREACH}{\mathcal{G}\text{-}\REACH}

\newcommand{\SPREACH}{\mathsf{SimplePath\text{-}\REACH}}

\newcommand{\GREC}{\mathcal{G}\text{-}\Rec}
\newcommand{\ISRBACK}{\textsf{IsRightChildBackEdge}}
\newcommand{\ISLBACK}{\textsf{IsLeftChildBackEdge}}
\newcommand{\NEXT}{\textsf{Next}}

\newcommand{\PATHUNCLE}{\textsf{PathUncle}}
\newcommand{\MAIN}{\textsf{Main}}

\newcommand{\etal}{\textit{et al.}}

\newtheorem{theorem}{Theorem}
\newtheorem{lemma}[theorem]{Lemma}
\newtheorem{proposition}[theorem]{Proposition}

\theoremstyle{definition}
\newtheorem{definition}[theorem]{Definition}

\theoremstyle{remark}
\newtheorem{remark}[theorem]{Remark}

\usepackage{url}

\crefname{algocf}{Alg.}{Algs.}
\Crefname{algocf}{Algorithm}{Algorithms}

\crefname{linecf}{Line.}{Lines.}
\Crefname{linecf}{Line}{Lines}

\begin{document}
\title{Space Complexity of Reachability in Simple Path Graphs}
\author{Krishnamoorthy Dinesh$^*$ \and Chandana Sasidharan\footnote{Department of Computer Science and Engineering, Indian Institute of Technology Palakkad, Kerala, India. \phantom{xxxx} Email: \texttt{\{kdinesh,112303001\}@iitpkd.ac.in}}}

\maketitle              

\begin{abstract}
	One of the central questions in space complexity is whether nondeterministic logspace computations ($\NL$) can be simulated in unambiguous logspace ($\UL$). A stronger notion, \emph{reach unambiguity} ($\RUL \subseteq \UL \cap \co\UL$), requires every configuration reachable from the start to have exactly one computation path~\cite{BGL91}. It is known that directed graph reachability is $\NL$-complete, planar graph reachability is in $\UL$, and undirected graph reachability is in deterministic logspace ($\L$). In this work, we study the space complexity of reachability problem for restricted directed graph families and show the following.
	\begin{enumerate}
		\item For reach unambiguous graphs (graphs with at most one path from start vertex to every vertex), Lange~\cite{LAN97} showed that reachability is in $\RUL$. As our main result, we show that for simple path graphs (introduced in~\cite{KKR08}, which contains reach unambiguous graphs), where each vertex has at most one \textit{simple} path from the start, the reachability problem lies in $\UL \cap \co\UL$. The key difficulty lies in recognizing whether the input graph is a simple path graph or not.
		
		\item Our first result can also be equivalently stated as follows: the recognition problem for simple path graphs is in $\UL \cap \co\UL$ if and only if the reachability problem restricted to simple path graphs is also in $\UL \cap \co\UL$. Inspired by this, we investigate the complexity of graph recognition versus graph reachability for other directed graph classes. Observe that for any graph class, solving reachability (for the class) also solves the recognition problem for that class. We show that for reach unambiguous graphs, solving recognition is as hard as solving reachability (making both of them $\RUL$-complete).   
	\end{enumerate}

\end{abstract}

\newpage
\section{Introduction}\label{sec:intro}
One of the central problems in space complexity asks if nondeterministic computation (captured by the class $\NL$ consisting of languages accepted by machines using logspace with potentially multiple accepting paths on any run) can be made unambiguous (captured by the class $\UL$ consisting of languages accepted by machine using logspace with at most one accepting path on any run)~\cite{RA2000}. A stricter form of unambiguity, namely, \textit{reach unambiguity} (captured by the class $\RUL \subseteq \UL\cap \co\UL$) introduced in \cite{BGL91}  requires that for every configuration reachable (which can be accepting, rejecting or aborting) from the start configuration, there is exactly one path in the configuration graph of the machine\footnote{See \cref{sec:prelims} for a formal definition.}. Note that there are two existing definitions for $\RUL$. Buntrock \etal~\cite{BGL91} says that the configuration graph of the run of a reach unambiguous machine could have multiple accepting configurations, while Lange's \cite{LAN97} definition has a fixed unique accepting configuration. Both definitions yield the same class \cite{BGL91}. In this work, we use Lange's definition.

\cref{fig:allcomplexityclasses} gives a visual representation of how the configuration graph of machines associated with each of the above complexity classes looks like.

\begin{figure}[h]
	\centering
	\includegraphics[width=0.87\linewidth]{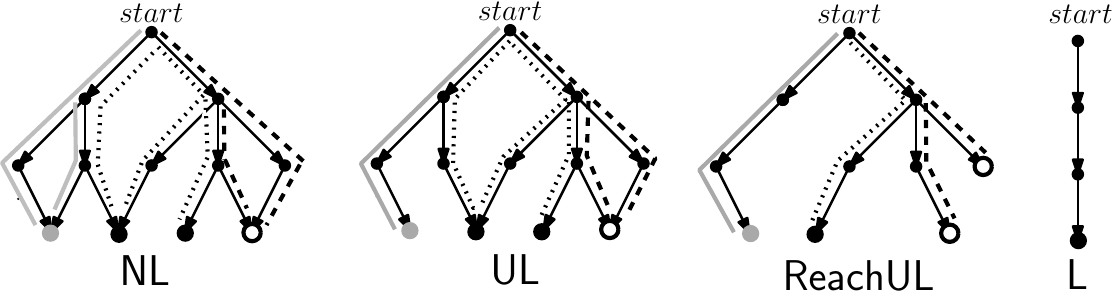}
	\caption{The gray vertices denote the accepting configuration, solid black vertices denote reject configuration, and black outlined vertices denote abort configuration. The gray paths are accepting paths, dotted paths are rejecting and dashed paths are aborting. The path constraints are as follows: $\NL$ allows an arbitrary number of paths, $\UL$ requires at most one accepting path, $\RUL$ requires at most one accept, reject, and abort paths (to each vertex), $\L$ has exactly one computational path ending in either accept or reject configuration.}
	\label{fig:allcomplexityclasses}
\end{figure}

A natural way to get a better understanding of these classes is by studying problems complete for them (under many-one logspace reductions). While the general problem of directed graph reachability is complete for $\NL$, we do not know of any natural complete problem for $\UL$. 

Surprisingly, for the stricter notion of reach unambiguity, Lange \cite{LAN97} showed that reachability for a certain restricted family of graph is complete for $\RUL$ (under logspace reductions). Of particular interest in this work is a special family of directed graphs called reach unambiguous graphs (where there is at most one path from start vertex to every other vertex) for which the reachability problem is in $\RUL$. This family of graphs is of interest as it is the only family of directed graphs for which reachability for $n$ vertex graphs can be checked deterministically in $O(\frac{\log^2 n}{\log\log n})$ space, beating what is achievable with Savitch's theorem (for more details, see~\cite{AL99,E23}). For connections to logspace algorithms in the quantum setting, see~\cite{AE24}). Also, showing that a problem is in $\REACHUL$ implies that it can be solved in in simultaneous polynomial time and poly-logarithmic space, which places it in the class $\SC^2$~\cite{BGL91,CookS79}. It is an open problem to design such algorithms solving reachability for general directed graphs~\cite{Wig}.

Kannan \etal~\cite{KKR08} considered a strict generalization of this family called \textit{simple path} graphs, where there is at most one \textit{simple} path from the start vertex to any other reachable vertex\footnote{This is indeed a strictly larger family, as the vertices reachable from the start vertex must form a tree in reach unambiguous graphs while a simple path graph is allowed to contain directed cycles. 
}.They gave a simultaneous sub-linear space and polynomial time algorithms for solving graph reachability restricted to this class. Towards improving this result, if we could show that the reachability for simple path graphs is solvable reach unambiguously in logspace, then it gives a polynomial time and poly-logarithmic space algorithm (as $\REACHUL \subseteq \SC^2$). 

Towards this, as our first contribution, we show that the reachability problem for simple path graphs (denoted $\SPREACH$) is in $\UL \cap \co\UL$ (\cref{Thm:SPReachULCoUL}). 
Note that all the graphs considered in this work are directed.

\begin{theorem}[Main]\label{Thm:SPReachULCoUL}
	Given a graph, checking if it is a simple path graph, is in $\UL \cap \co\UL$. In particular, this means that the reachability problem restricted to this class $\SPREACH \in \UL \cap \co\UL$.
\end{theorem}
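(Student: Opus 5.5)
The plan is to reduce recognition of simple path graphs to reachability questions on auxiliary graphs that are already known to be solvable in $\UL \cap \co\UL$, ideally reach-unambiguous graphs, which are in $\RUL$ by Lange. The first step is a structural characterization. Fix the start vertex $s$. If $G$ is a simple path graph, the set $R$ of vertices reachable from $s$ carries a spanning out-tree $T$ rooted at $s$: every $v\in R$ has a unique simple path $P_v$ from $s$, and these paths are prefix-closed. Every edge $(u,v)$ with $u\in R$ other than a tree edge must then be a \emph{back edge}, meaning $v$ is an ancestor of $u$ in $T$. Indeed, if $v$ is not on $P_u$, then $P_u$ followed by $(u,v)$ is a second simple path to $v$.

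We also need a converse that rules out alternative simple paths using back edges. Consider a simple path $s\to\cdots\to x$. Suppose it climbs a back edge $(u,a)$ with $a$ a proper ancestor of $u$. Afterwards it must leave the subtree of $a$ without revisiting $P_a$. Every edge goes either down the tree or up to an ancestor, so the path stays inside the subtree of $a$. It is then confined to descendants of $a$, must avoid $P_a\setminus\{a\}$, and can only reach $x$ by descending.

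So the candidate characterization is the following. $G$ is a simple path graph iff there is an out-tree $T$ on $R$ in which every non-tree edge out of $R$ goes to a $T$-ancestor, and for every back edge $(u,a)$ no vertex $x$ is reachable along a different simple route. Concretely: after taking $(u,a)$, the path cannot re-enter the subtree below $a$ through a branch other than the one containing $u$ and then, via further back edges to intermediate ancestors, reach a vertex not on $P_u$. I would make this precise as a local, logspace-checkable condition on pairs and triples of vertices and tree relations.

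The second step is to compute $T$ and the ancestor relation unambiguously. The natural choice for the tree parent of $v$ is the unique in-neighbour $w$ with $v\notin P_w$. In a simple path graph, every other in-neighbour of $v$ is a descendant of $v$ via a back edge. Define $G'$ by deleting from $G$ all edges $(u,v)$ with $v$ reachable-to-$u$ "in the tree sense". To break circularity, I would instead guess and verify. A UL machine guesses a path from $s$ to $v$ vertex by vertex. It rejects the current step unless the chosen predecessor is the unique in-neighbour not reachable from $v$ in $G-\{\text{edges into }s\text{-side}\}$. Each such test is a reachability query, so we use an inductive-counting style layer-by-layer argument: assuming the tree up to depth $d$ is recognized, the tree up to depth $d+1$ is reach-unambiguous. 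Hence queries can be answered in $\RUL$, and the composition stays in $\UL\cap\co\UL$ because $\UL\cap\co\UL$ is closed under logspace Turing reductions with unambiguous oracle access. Equivalently, the graph $G$ with all "back edges" removed is reach unambiguous, which lets us invoke Lange's theorem once the back edges are identified.

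The third step is the recognition itself. Given the tree, we check two things: that all non-tree edges from $R$ are back edges (an ancestor test via $\RUL$ reachability in $T$), and the confinement condition from the first step. Reachability for a simple path graph is then just reachability in $T$, giving the "in particular". Complementation works because every subroutine is in $\UL\cap\co\UL$ and we only use it as an oracle.

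The main obstacle is the circularity: identifying back edges needs ancestor queries in $T$, but $T$ is defined only if $G$ is a simple path graph. I would first try the depth-inductive argument, verifying the simple-path property on the subgraph induced by vertices at tree depth $\le d$, with depth counts maintained as in Immerman–Szelepcsényi but unambiguously. This should ensure that each oracle call is made only on a graph already certified reach-unambiguous, where Lange's $\RUL$ algorithm applies. The delicate part will be proving that the local back-edge conditions really exclude all alternative simple paths.
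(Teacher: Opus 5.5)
Your proposal has a genuine gap exactly where you place the main obstacle. Nothing in it determines, unambiguously and before $G$ is known to be a simple path graph, the tree $T$, the depth of a vertex, or which edges are back edges.

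Step~2 is not yet a procedure. The deleted edge set (``edges into $s$-side'') is undefined, and ``tree depth $\le d$'' presupposes the tree you are trying to compute. ``Depth counts maintained as in Immerman--Szelepcs\'enyi but unambiguously'' names the property you want without giving a mechanism; ordinary inductive counting is ambiguous, because many guessed certificates succeed. Deleting the back edges and running Lange's algorithm, or querying an $\RUL$ oracle only on ``already certified'' graphs, likewise presupposes that the back edges are already known.

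The paper breaks this circularity with a Lange-style depth-bounded DFS. At stage $d$ it works under the inductively maintained hypothesis that $G_d(s)$ is a simple path graph with correct heights. That hypothesis is what makes the back-edge test unambiguous: $\ISLBACK(y,h)$ guesses a walk of length $h$ from $s$, aborts unless it ends at $y$, and reports whether $L(y)$ occurred, and any such walk must be the unique simple path. $\MAIN$ then rejects as soon as a forward or cross edge, or two depth-$d$ vertices with a common child, appears.

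Step~1 is also incomplete, but here the repair is easy. The ``confinement condition'' is unnecessary, and the reasoning about the path staying inside the subtree of $a$ is off track. Suppose $T$ is an out-tree on $R$ rooted at $s$ in which every non-tree edge out of $R$ points to a $T$-ancestor (self-loops included). Then the prefix of a simple path from $s$ before its first non-tree edge $(u,a)$ is exactly $P_u$, and $a\in P_u$ has already been visited. So a simple path from $s$ never uses a back edge, and this condition alone characterizes simple path graphs.

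With that in hand, your outline can be completed without any $\RUL$ oracle. Make ``unambiguous inductive counting'' precise as Reinhardt--Allender's double counting with unit weights. It unambiguously either finds a reachable vertex with two shortest paths from $s$ (reject, since shortest paths are simple) or computes all distances $d(\cdot)$. Take $T$ to be the shortest-path tree, and check, for each edge $(u,v)$ with $u$ reachable and $d(v)\le d(u)$, that $v$ lies on the path of length $d(u)$ from $s$ to $u$. That path can be guessed with exactly one successful computation. Edges with $d(v)=d(u)+1$ are tree edges because shortest paths are unique. Every subroutine then has exactly one non-aborting computation, which gives $\UL\cap\co\UL$. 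As written, though, your proposal contains neither this nor any other mechanism for the key step, so it is a plan rather than a proof.
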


Observe that if a graph $G$ with $s$ as a start vertex is guaranteed to be a simple path graph, then there is at most one simple path to any other vertex reachable from $s$. Hence, such a path must also be the shortest path, which is unique. A result of Reinhardt and Allender~\cite{RA2000} shows that given a graph $G$ and two vertices $s$ and $t$, if there is a logspace computable weight assignment to the edges of a graph that guarantees that the minimum weight path from $s$ to $t$ is unique, checking reachability from $s$ to $t$ in $G$ is in $\UL \cap \co\UL$. Hence, assigning unit weights to each edge (which is indeed logspace computable), immediately shows that the reachability problem $\SPREACH$ is in $\UL\cap\co\UL$. Hence, the difficult part is in recognizing whether the given graph is indeed a simple path graph\footnote{We thank the anonymous reviewers for pointing this out to us.}. In such graphs, the vertices that are \textit{not} reachable from the start vertex could have arbitrary structure (see \cref{fig:simplepathgrah}). This also adds to the difficulty of the recognition problem for simple path graphs.

Due to this, it is not clear if one could design a (logspace computable) weighting scheme to check if a given graph is a simple path graph (via an application of Reinhardt and Allender's result). Instead, we build on the algorithm of Lange~\cite{LAN97} for reach unambiguous graphs (which can be viewed as a depth bounded depth first search) and extend the same to simple path graphs. 

Motivated by the relation between the recognition and reachability for simple path graphs, as our second contribution, we explore this relation for other graph classes also. In general, the complexity of reachability restricted to a specific family of graphs seems to give a lot of information on the power and limitations of space bounded complexity classes. A celebrated result due to Reingold~\cite{Rein08} showed that if the graphs are undirected, then the reachability problem is solvable in logspace. On the other hand, the problem stays complete for $\NL$ if the graphs are restricted to be acyclic. Interestingly, the reachability problem is solvable in $\UL$ if the graph family is planar~\cite{BTV09} (or even of higher genus~\cite{GST19}).

In this context, it is natural to ask about the complexity of the reachability problem for a special family of graphs $\mathcal{G}$, called the $\GREACH$ problem. For such a problem, it is also necessary to check if the graph belongs to the family of graphs $\mathcal{G}$, called as the $\GREC$ problem. Note that, in general, the reachability problem for a family of graphs can be harder than the recognition problem. For instance, if $\mathcal{G}$ is the family of directed bipartite graphs $G=(S,T,E)$, where the edges may go from $S$ to $T$, $T$ to $S$ or both, checking if the underlying undirected graph $G$ is bipartite is in logspace~\cite{ACR00}, while reachability restricted to bipartite graphs is complete for $\NL$. We ask if the same holds true for complexity classes below $\NL$.

When $\mathcal{G}$ is reach unambiguous graphs, we show that the recognition problem is as hard as the reachability problem (\cref{Thm:Mangrove_reach}), thereby exhibiting a class below $\NL$ for which the previous statement is not true. Note that, since $\GREACH$ for reach unambiguous graphs is in $\RUL$ \cite{LAN97}, and as the recognition problem reduces to it, the recognition problem for reach unambiguous graphs is also solvable in $\RUL$. Hence the recognition and reachability problems reduces to each other for reach unambiguous graphs.

\begin{proposition} \label{Thm:Mangrove_reach}
	Let $\mathcal{G}$ be reach unambiguous graphs, let $M$ be a reach unambiguous machine by which $\GREC$ is in $\RUL$. Then, there exists a reach unambiguous machine which makes exactly one oracle query to $L(M)$, solving the $\GREACH$ problem.
\end{proposition}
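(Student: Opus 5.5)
The plan is a direct reduction: reachability in a reach unambiguous graph should reduce to a single recognition query on a slightly modified graph, and the machine outputs the negation of the oracle's answer. Let the input be $(G,s,t)$, where $G$ is promised to be reach unambiguous with respect to $s$. If $s=t$ we accept immediately, deterministically. Otherwise, we build $G'$ from $G$ by adding one fresh vertex $t'$ and the two edges $s\to t'$ and $t\to t'$. We then ask the oracle whether $(G',s)\in L(M)$, accept if and only if the answer is ``no'', and keep $s$ as the start vertex.

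Correctness rests on one observation: $t'$ is a sink, so adding it cannot create any new path to a vertex of $G$. Hence every vertex of $G$ has exactly the same set of paths from $s$ in $G'$ as in $G$, so by the promise it has at most one. It remains to count the paths from $s$ to $t'$, which split into two kinds:
\begin{itemize}
\item the single edge $s\to t'$;
\item one path for each path from $s$ to $t$ in $G$, each followed by the edge $t\to t'$.
\end{itemize}
If $t$ is not reachable from $s$ in $G$, then $t'$ has exactly one path and $G'$ is reach unambiguous. If $t$ is reachable, then $t$ has exactly one path by the promise, so $t'$ has exactly two paths and $G'$ is not reach unambiguous. Therefore $(G',s)\in L(M)$ if and only if $t$ is not reachable from $s$ in $G$, and the single query decides $\GREACH$.

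For the machine itself, $G'$ is produced deterministically in logspace: copy the adjacency structure of $G$ and append one vertex and two edges. The only oracle interaction is this one query, followed by a deterministic negation of the answer. Since the base machine is deterministic apart from the oracle call, its configuration graph is trivially reach unambiguous. Negating the answer causes no problem, because the oracle answer is a single bit returned deterministically, and membership via $M$ is what supplies the $\RUL$ bound. This gives a reach unambiguous machine making exactly one query to $L(M)$, as required.

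The main point needing care is the handling of the promise and the degenerate cases. If $s=t$, the edge $t\to t'$ would coincide with $s\to t'$; that is why $s=t$ is handled separately before the construction. If the input graph violates the promise, the behaviour is irrelevant for the promise problem $\GREACH$. The one fact to verify carefully is that $t'$ having no out-edges guarantees that no vertex other than $t'$ changes its number of paths from $s$.
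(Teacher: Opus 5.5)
Your proposal has a genuine gap: it decides a promise version of the problem rather than the language the paper defines, and, separately, the graph $G'$ it queries is malformed.

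\textbf{Main gap: $\GREACH$ is not a promise problem.} The paper defines $\GREACH=\{(G,s,t)\mid G\in\GREC \text{ and there is an } s \text{ to } t \text{ path in } G\}$. A correct machine must therefore reject every input with $G\notin\GREC$, and the paper's proof does this explicitly. In your construction, if $G$ has two paths from $s$ to some vertex, so does $G'$. The oracle then answers ``no'' and you accept, although $(G,s,t)\notin\GREACH$.

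This cannot be rescued within one query of your form. A ``no'' on $G'$ is consistent both with $G\notin\GREC$ and with ``$G\in\GREC$ and $t$ reachable''. Following a ``no'' with a nondeterministic search does not help either: on a graph not known to be reach unambiguous, such a search can itself be ambiguous, which breaks reach unambiguity of your machine on that input.

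The missing idea is to spend the single query on $G$ itself, which is what the paper does:
\begin{itemize}
\item on ``no'', reject;
\item on ``yes'', guess $\ell<|V(G)|$ and a walk of length $\ell$ from $s$;
\item accept if the walk ends at $t$, and otherwise abort in a configuration determined by $(\ell,x)$.
\end{itemize}
Because $G$ is now certified reach unambiguous, each configuration of this search is reached along at most one computation path. Such a configuration is determined by $\ell$, the step count and the current vertex. If you also want membership in $\RUL$, $\RUL^{\RUL}=\RUL$ then gives it.

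\textbf{Second gap: $G'$ is never in $\GREC$.} The paper's reach unambiguous graphs require every out-degree to be $0$ or $2$, with internal/leaf labels. Adding the single edge $s\to t'$ gives $s$ out-degree $1$ or $3$. So $G'\notin\GREC$ on every input, the oracle always answers ``no'', and your machine accepts every $(G,s,t)$ with $s\neq t$.

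This part is repairable with a degree-respecting gadget. For instance, add a fresh start vertex $s_0$ with children $s$ and $t'$. Also add a fresh internal vertex $\hat t$ with children $t$ and $t'$, and redirect all edges that formerly entered $t$ to $\hat t$. Your sink-based counting, which is sound in itself, then shows that $G''\in\GREC$ iff $G\in\GREC$ and $t$ is unreachable from $s$. That is exactly the promise version, which is why the main gap above remains even after this repair.
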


Towards proving the above, note that a $\RUL$ algorithm with a $\RUL$ oracle access can be simulated in $\RUL$. Such a result is known when the base machine is in deterministic logspace~\cite{BGL91}. We state this as a proposition.

\begin{restatable}{proposition}{RULtoRUL} \label{Thm:RUL^RUL}
	$\RUL^{\RUL} = \RUL$
\end{restatable}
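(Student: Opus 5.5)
The inclusion $\RUL \subseteq \RUL^{\RUL}$ is immediate: the base machine simply never queries. For the other direction, let $M$ be a reach unambiguous oracle machine running in logspace with oracle $L(N)$, where $N$ is a reach unambiguous logspace machine. I have to fix a query model. I take the standard Ruzzo--Simon--Tompa style convention, used by Buntrock \etal~\cite{BGL91} for the deterministic case: the query tape is write-only, it is written deterministically, and the tape is reset after each query. So between any two nondeterministic steps of $M$, each query string is produced by a logspace transducer whose output bits can be recomputed on demand. The plan is to build a single machine $M'$ that simulates $M$ step by step. Whenever $M$ enters a query state, $M'$ runs a simulation of $N$ on the query string, feeding bits of the string to $N$ by recomputing them from the saved configuration of $M$ at the start of the query.

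The key point is that $N$ alone is not enough. $N$ has a unique accepting configuration, and every configuration reachable from its start has a unique path, but on a rejected query $N$ merely fails to reach acceptance; there is no certified rejection. So for each query I would run both $N$ and a reach unambiguous machine $\overline{N}$ for the complement. This uses the closure of $\RUL$ under complement, which follows from $\RUL \subseteq \UL\cap\co\UL$ and, more precisely, from the inductive counting technique of \cite{BGL91}. That technique yields a reach unambiguous machine deciding $L(N)$ with two distinguished terminal configurations, \emph{accept} and \emph{reject*}, each reached by exactly one path when it is reached at all; all other branches abort. Call this two-sided machine $N^{\pm}$.

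$M'$ then works as follows. At a query, it saves $M$'s configuration $c$ and runs $N^{\pm}$ on the implicit query string, recomputing query bits from $c$. If $N^{\pm}$ reaches \emph{accept}, $M'$ resumes $M$ from $c$ in the YES state. If it reaches \emph{reject*}, $M'$ resumes $M$ in the NO state. Any aborting branch of $N^{\pm}$ causes $M'$ to halt in a dedicated abort configuration.

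The main obstacle is showing that $M'$ is reach unambiguous. Every configuration of $M'$ reachable from its start must have a unique path, and $M'$ must have a single accepting configuration. I would argue by induction along the computation. A configuration of $M'$ is either an $M$-configuration or a pair consisting of a saved $M$-configuration $c$ and an $N^{\pm}$-configuration. For the pair, reach unambiguity of $M$ gives a unique path to $c$, and reach unambiguity of $N^{\pm}$ gives a unique path from its start configuration on that query. When $M$ resumes after a query, the resulting configuration is reached only through the unique accept or reject* configuration of $N^{\pm}$, and the answer is fixed, so the path is again unique.

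Several points need care. Configurations of $M'$ must record enough of the state (saved $c$, the simulated $N^{\pm}$ configuration, and a phase bit) so that distinct simulation paths never merge; merges would violate uniqueness. The abort configurations also need a unique path. I would handle this by keeping the abort configurations of $M'$ distinct, tagging each with the full current configuration rather than collapsing them. If that is not allowed, I would use Lange's convention and require uniqueness only for configurations as they are defined. Finally, $M'$ needs a single accepting configuration: erase the worktape and move the heads home before accepting, which is a deterministic cleanup and so preserves uniqueness. Space is $O(\log n)$ because query strings have polynomial length and are never stored.
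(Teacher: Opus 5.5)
Your proposal is correct and is essentially the paper's argument (the trivial inclusion, plus answering each oracle query reach unambiguously via $\RUL = \co\RUL$ from \cite{BGL91}), worked out in much more detail, e.g.\ the query convention with deterministically written queries and the two-sided machine $N^{\pm}$. Two small fixes: closure under complement does not follow from $\RUL \subseteq \UL \cap \co\UL$ (that only puts the complement in $\UL$), so invoke \cite{BGL91} directly as the paper does; and a deterministic cleanup can send two distinct configurations to the same successor, so it preserves uniqueness here only because $M'$ reaches at most one accepting configuration, which it inherits from $M$.
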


Indeed, $\RUL \subseteq \RUL^\RUL$. To see the other direction, any machine $N$ with an oracle access to $L'$ in $\RUL$, where $N$ accepts a language in $\RUL$ can be simulated reach unambiguously. As $\RUL = \co\RUL$ \cite{BGL91}, the answers to oracle queries in $L'$ can be found reach unambiguously.

\paragraph{Our techniques and overview of proofs: }We describe the main ideas that goes in proving \cref{Thm:SPReachULCoUL}. 
We start by describing Lange's procedure and highlight the challenges involved in solving reachability for simple path graphs.

Lange's procedure can be viewed as a depth bounded depth first search. More precisely, the procedure proceeds in rounds (starting with $d=1$), where each round starts by performing a pre-order traversal of the subgraph of the original graph consisting of all the vertices that are reachable from the start vertex up to a fixed depth $d$. To facilitate the pre-order traversal, the procedure takes advantage of the graph structure and uses nondeterminism to correctly identify the next vertex in the pre-order traversal. If the nondeterministic choices are correct, the accepts and rejects will be unambiguous. The non-trivial part of Lange's algorithm is that if the nondeterministic choices are incorrect, the procedure aborts and does so unambiguously, making the overall algorithm reach unambiguous.

For each value of $d$, the procedure verifies that the above subgraph (vertices reachable from $s$ at distance at most $d$) is indeed reach unambiguous. The run of this procedure eventually defines a DFS tree rooted at $s$ consisting of all the vertices reachable from $s$. Note that this tree is never stored explicitly.

We now discuss the challenges involved in adapting Lange's procedure to simple path graphs. Simple path graphs are those graphs for which any DFS tree starting from $s$ does not have a forward or cross edges (See \cref{lemma:char-of-simple-path}). It can be argued that Lange's procedure \cite{LAN97} detects if the input graph has a forward, back edge or cross edge and rejects in all the three cases unambiguously. In particular, this means that the procedure will reject simple path graphs. 

The first challenge is to make sure that the procedure does not reject on a back edge. This requires us to unambiguously identify if an edge is a back edge or not. We achieve this using the procedures $\ISLBACK$ (\cref{alg:isleftchildbackedge}) and $\ISRBACK$, which takes in a vertex $y$ and height $h$ of $y$ in the DFS tree and checks if its left or right child forms a back edge unambiguously. However, it need not abort unambiguously. Note that Lange's algorithm \cite{LAN97} does not need to perform such checks.

The second challenge is to identify the next vertex to visit in the traversal. Consider the case where the subgraph reachable from $s$ is a tree with back edges. Suppose that the current vertex (being explored) has a left child which is not part of a back edge. Then, the pre-order traversal must next visit the left child. In a different case, if the current vertex is the left child of the parent and is also a leaf vertex, then the next vertex to visit in the pre-order is the right child of the parent. However, it could be that the right child is a back edge\footnote{Forward and cross edges are not possible by \cref{lemma:char-of-simple-path}.}. Lange's procedure does not have to handle this case as $\REACHUL$ configuration graphs do not have back edges. 

To understand the third challenge, let us describe how the next vertex in the pre-order traversal is found if there are no more left child or right child left to move. This is because they have already been traversed or because they are back edges or leads to a deeper depth. 
Consider the case when the current vertex is a leaf and is the right child of its parent. Additionally, let the parent be the left child of its grandparent. In this case, the next vertex to visit is the grandparent's right child. However, it could be that this child could form a back edge. 

We handle both of these issues in the $\NEXT$ and $\PATHUNCLE$ procedure (\cref{alg:next}, \cref{alg:path_uncle}). The rest of this paper is organized as follows.

\noindent
\textbf{Organization of the paper:} After giving the necessary background in~\cref{sec:prelims}, we first shows that reachability in simple path graphs is in $\UL \cap  \co \UL$ in \cref{sec:simplepath_mainresult}, which prove \cref{Thm:SPReachULCoUL}. Towards this, we introduce a few procedures such as the $\MAIN$ (explained in \cref{sec:main-procedure}) that performs a depth bounded depth first search on the input graph and checks for reachability. For this traversal, $\MAIN$ uses the procedure $\NEXT$ (explained in \cref{sec:next-procedure}) that returns the pre-order successor of a given vertex in a special subgraph induced on the input graph. Two supporting algorithms used in $\NEXT$ are explained in \cref{sec:back-edge-detect,sec:pathuncle}. Finally, we conclude with a discussion on the recognition vs reachability for special family of graphs in \cref{sec:rec_vs_reach}, proving \cref{Thm:Mangrove_reach} from \cref{sec:intro}.

\paragraph{AI Disclosure} The Ti\textit{k}Z code for \cref{fig:simplepathgrah,fig:uncle-crop,fig:nextreturn1,fig:nextreturn2,fig:nextreturn3,fig:nextreturn4} were generated with the help of ChatGPT-5 and Claude-4. All other parts of this manuscript did not use any form of generative AI (open or proprietary).

\paragraph{Acknowledgements}
We are grateful to the anonymous reviewers for their comments and suggestions and, in particular, for pointing out to us that recognizing simple path graphs also solves the reachability      problem unambiguously (see Page \pageref{Thm:SPReachULCoUL}). C. S. thanks the Department of Computer Science and Engineering, IIT Palakkad for the extended financial support received.

\section{Preliminaries}\label{sec:prelims}

In this section, we give an overview on the complexity classes and graph theory terms used. The standard definitions for complexity classes appear in Kozen \cite{kozen2006} and Arora-Barak \cite{arora2009}. 

As mentioned in the introduction, all the graphs considered are directed. One such graph class discussed in this work is reach unambiguous graphs. It is defined as follows in \cite{LAN97}.
\begin{definition}[Reach Unambiguous Graph $(G,s)$]
	A directed graph $G$ is a reach unambiguous graph if there exists at most one path from the start vertex to any other vertex in $G$. Moreover, the out-degree of each vertex is either 0 or 2 and each vertex is marked as internal or leaf.

\end{definition}

A strict generalization of Reach unambiguous graphs are Simple Path Graphs (\cref{def:simplepathgraph}) introduced by \cite{KKR08}. It is defined as follows.

\begin{figure}[h]
	\begin{minipage}{0.45\linewidth}
		\begin{definition}[Simple Path Graph] For a directed graph $G$ and a vertex $s$, the pair  $(G,s)$ is a simple path graph if there exists at most one simple path from $s$ to any vertex in the graph. Moreover, the out-degree of each vertex is either $0$ or $2$, and each vertex is marked as internal or leaf. For a fixed embedding of the graph, vertex $v$ with non-zero out-degree, we denote $R(v)$ to be $v$'s right child and $L(v)$ to be $v$'s left child.\label{def:simplepathgraph}
		\end{definition}
	\end{minipage}
	\hfill 
	\begin{minipage}{0.45\linewidth}
		\centering
		\begin{tikzpicture}[
			->, >=stealth,
			every node/.style={fill, circle, inner sep=2pt},
			level 1/.style={sibling distance=2cm, level distance=1.2cm},
			level 2/.style={sibling distance=1cm, level distance=1.2cm}
			]
			\node[
			label=above:$s$, 
			] (top) {}
			child { node (L) {} 
				child { node (L1) {} }
				child { node (L2) {} }
			}
			child { node (R) {} 
				child { node (R2) {} }
			};
			\node[right=3cm of current bounding box.north] (S) {}
			child { node (T2) {} }
			child { node (T3) {} };
			\draw (T3) -- (T2);
			\draw (T2) -- (R);
			\draw (T2) -- (R2);
			\draw (T3) -- (R2);
			\draw[bend left=60] (R) to (top);
			\begin{pgfonlayer}{background}
				\node[
				fill=gray!35,
				dashed,
				rounded corners=3pt,
				inner sep=1pt,
				fit=(S)(T2)(T3)
				] {};
			\end{pgfonlayer}
			
		\end{tikzpicture}
		\caption{A simple path graph $(G,s)$. The shaded region in gray denotes the unreachable vertices from $s$.}
		\label[figure]{fig:simplepathgrah}
	\end{minipage}
\end{figure}
\vspace{2.5mm}

Note that the graph induced by the unreachable vertices from $s$ does not have any restrictions on its structure. Moreover, unlike reach unambiguous graph, simple path graphs may have walks from $s$ to any other vertex reachable in $G$. Additionally, when it is clear from the context, vertex $s$ is omitted while describing a simple path graph.

We also need the definition of $G_d(s)$ (\cref{def:Gds}) and uncle of a vertex in $G_d(s)$ (\cref{def:uncle}) to understand Lange's algorithm. They are defined as follows.

\begin{definition}[$G_d(s)$ of a simple path graph $G$]\label{def:Gds}
	For a simple path graph $(G,s)$, $G_d(s)$ is the induced subgraph of $G$ formed by all vertices reachable from $s$ within distance $d$.
\end{definition}

\begin{figure}
	\begin{minipage}{0.35\linewidth}
		\begin{definition}[Uncle of a vertex $v$ for a depth $d$]\label{def:uncle}
			For a simple path graph $(G,s)$, an uncle of a vertex $v$ is defined as follows. If $v$ is $s$, uncle is $R(v)$. Otherwise, uncle is the next vertex to be visited in the pre-order traversal of subtree $G_d(s)$, assuming all the vertices of the subtree of $G_d(s)$ rooted at $v$ is visited. See \cref{fig:uncle-crop} for an example.
		\end{definition}	
		
	\end{minipage}
	\hfill 
	\begin{minipage}{0.62\linewidth}
		\begin{tikzpicture}[scale=0.6,
			node/.style={circle, fill=black, inner sep=1.5pt},
			v_node/.style={circle, draw=black, fill=gray!60, inner sep=2pt},
			uncle_node/.style={circle, draw=black, fill=white, inner sep=2pt},
			edge/.style={-Stealth, thick}
			]
			
			\node[v_node, label={above:$s=v$}] (s) at (2, 3) {};
			\node[node] (parent) at (1, 1.5) {};
			\node[uncle_node, label=right:$Uncle(v)$] (uncle) at (3, 1.5) {};
			\node[node] (v) at (2, 0) {};
			\node[node] (child1) at (0, 0) {};
			\node[node] (below1) at (-0.5, -1.5) {};
			\node[node] (below2) at (0, -2.5) {};
			
			\draw[edge] (s) -- (parent);
			\draw[edge] (s) -- (uncle);
			\draw[edge] (parent) -- (v);
			
			\draw[edge] (parent) to (child1);
			\draw[edge, bend right=30] (child1) to (parent);
			\draw[edge, bend left=40] (below1) to (parent);
			\draw[edge] (child1) -- (below1);
			\draw[edge] (below1) -- (below2);
			
			\draw[dashed] (-0.5, -1.5) -- (3.5, -1.5) node[right, pos=1] {Depth $d$};
			
		\end{tikzpicture}
		\hfill
		\begin{tikzpicture}[scale=0.6,
			node/.style={circle, fill=black, inner sep=1.5pt},
			v_node/.style={circle, draw=black, fill=gray!60, inner sep=2pt},
			uncle_node/.style={circle, draw=black, fill=white, inner sep=2pt},
			edge/.style={-Stealth, thick}
			]
			
			\node[node, label={above:$s$}] (s) at (2, 3) {};
			\node[node] (parent) at (1, 1.5) {};
			\node[node] (uncle) at (3, 1.5) {};
			\node[uncle_node,label=right:$Uncle(v)$] (v) at (2, 0) {};
			\node[node] (child1) at (0, 0) {};
			\node[v_node, label=left:$v$] (below1) at (-0.5, -1.5) {};
			\node[node] (below2) at (0, -2.5) {};
			
			\draw[edge] (s) -- (parent);
			\draw[edge] (s) -- (uncle);
			\draw[edge] (parent) -- (v);
			
			\draw[edge] (parent) to (child1);
			\draw[edge, bend right=30] (child1) to (parent);
			\draw[edge, bend left=40] (below1) to (parent);
			\draw[edge] (child1) -- (below1);
			\draw[edge] (below1) -- (below2);
			
			\draw[dashed] (-0.5, -1.5) -- (3.5, -1.5) node[right, pos=1] {Depth $d$};
			
		\end{tikzpicture}
		\caption{Example illustrating the uncle of a vertex $v$ at different positions.}
		\label{fig:uncle-crop}
	\end{minipage}
\end{figure}

\begin{definition}[DFS Tree, Tree edges]
	A DFS tree $G'$ of a graph $G$ for a DFS traversal (that is, pre-order traversal) is a subgraph of $G$ with $V(G')=V(G)$ and $E(G')$ being the set of edges taken during the traversal. These edges are also known as tree edges.
\end{definition}
In a DFS tree of a directed graph $G$, the edges of $G$ graph are classified as follows. This well known~\cite{CLRS09} formalization is described below.
\begin{definition}[Edge classification in DFS]
	In a depth-first search (DFS) tree of a directed graph, edges in $G$ not appearing in the DFS tree can be classified as follows. A forward edge is an edge that connects a vertex to a descendant in the DFS tree, but not directly to its immediate child. A cross edge is an edge that connects two vertices where neither is a descendant of the other. Finally, a back edge is an edge that points from a vertex to one of its ancestors in the DFS tree. This edge indicates the presence of a directed cycle in a directed graph. The dashed edge in \cref{fig:simplepathgrah} is a back edge.
\end{definition}
Note that trees, by definition, do not contain forward, cross or back edges. Simple path graphs can be viewed as generalizations since they cannot contain forward or cross edges (but may contain back edges). We use this characterization crucially in \cref{sec:simplepath_mainresult}.

\begin{lemma}[Lemma 2 of \cite{KKR08}]\label{lemma:char-of-simple-path}
	A pair $(G,s)$ where $G$ is a directed graph and $s \in V(G)$ is a simple path graph iff for any DFS tree produced from any vertex $v$ reachable from $s$ does not contain forward or cross edges.
\end{lemma}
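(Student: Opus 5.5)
The plan is to prove the two directions separately. Throughout, ``DFS tree from $v$'' means a DFS from $v$ over the vertices reachable from $v$. Since $v$ is reachable from $s$, all of these vertices are reachable from $s$. The main tool is the white-path theorem~\cite{CLRS09}: if $w$ is reachable from $u$ by a path of vertices that are all unvisited when $u$ is discovered, then $w$ becomes a DFS-descendant of $u$.

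\emph{($\Rightarrow$)} Assume $(G,s)$ is a simple path graph. I would first prove a structural claim. For every vertex $y$ reachable from $s$, let $\pi(y)$ be its unique simple path from $s$. Every prefix of $\pi(y)$ ending at a vertex $u$ is a simple path to $u$, so it equals $\pi(u)$. Hence the paths $\{\pi(y)\}$ form an out-tree $T$ rooted at $s$. Now take any edge $(a,b)$ between reachable vertices that is not an edge of $T$. If $b$ does not lie on $\pi(a)$, then $\pi(a)$ followed by $(a,b)$ is a simple path to $b$ different from $\pi(b)$, which is impossible. So every non-$T$ edge goes from a vertex to one of its $T$-ancestors. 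In particular, every reachable $b \neq s$ has exactly one in-edge from a vertex that is not its $T$-descendant, namely its $T$-parent; every other in-edge of $b$ comes from a $T$-descendant of $b$.

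Next, fix a DFS from a reachable $v$ and suppose $(a,b)$ is a forward or cross edge. Let $p$ be the DFS-parent of $b$. Then $(p,b)$ and $(a,b)$ are distinct edges into $b$, so at least one of $p$ or $a$ is a proper $T$-descendant of $b$, and it reaches $b$ along $T$-edges. I would combine these $T$-paths with the white-path theorem to compare discovery times. The goal is to show that the DFS reaches $b$ only after visiting a vertex whose $T$-path forces a second simple $s$--$b$ path, namely the $T$-path to a $T$-descendant of $b$ followed by a suitably chosen non-$T$ edge. The alternative is to show that $(a,b)$ is in fact a back edge.

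I expect this splicing step to be the main obstacle: we must rule out that the $s$-prefix of the path meets the DFS subtree in a way that makes the concatenation non-simple. My first attempt would be induction on the DFS discovery order. The invariant to maintain is that each DFS tree edge into a vertex $w \neq v$ is either the $T$-edge into $w$, or $w$ lies on $\pi(v)$. This invariant would make $D$ ``locally a $T$-subtree plus ancestors of $v$'', and from it forward and cross edges would be excluded directly by the ancestor property of non-$T$ edges.

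\emph{($\Leftarrow$)} Suppose $P_1 \neq P_2$ are two simple paths from $s$ to some $y$. I would exploit the freedom in the statement to choose both the start vertex and the edge order of the DFS. Let $a$ be the last common vertex of the shared prefix of $P_1$ and $P_2$. Let $c$ be the first vertex after $a$ on $P_2$ that also lies on $P_1$ after $a$; such a $c$ exists because $y$ qualifies. This gives two internally disjoint simple paths $Q_1$ (along $P_1$) and $Q_2$ (along $P_2$) from $a$ to $c$. Run the DFS from $a$, which is reachable from $s$, and order out-edges so that $Q_1$ is explored first. Then $c$ is a DFS-descendant of $a$ via $Q_1$. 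The last edge $(q,c)$ of $Q_2$ is not a tree edge, and $q$ is not an ancestor of $c$: the DFS ancestors of $c$ are exactly the vertices of $Q_1$, and $Q_2$ avoids them internally. So $(q,c)$ is not a back edge, and it must be a forward or cross edge (or the edge $(a,c)$ itself, which is then forward). This contradicts the hypothesis.
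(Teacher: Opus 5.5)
The ($\Rightarrow$) direction cannot be completed as you set it up, because for a DFS started at a reachable $v\neq s$ the claim is false. Take edges $s\to a$, $a\to b$, $b\to s$, $b\to a$ (attach a leaf to $s$ and to $a$ if you want out-degrees $0$ or $2$). The simple paths from $s$ are $s$, $s\,a$, $s\,a\,b$, so $(G,s)$ is a simple path graph. Yet a DFS from $b$ that explores $(b,s)$ first makes $(b,a)$ a forward edge, and one that explores $(b,a)$ first makes $(s,a)$ a cross edge. Your invariant holds here (every tree edge is a $T$-edge or enters a vertex of $\pi(b)$), so it is your final step that fails. Non-$T$ edges point to $T$-ancestors, but the vertices of $\pi(v)$, which are $T$-ancestors of $v$, are DFS-descendants of the root $v$, so such edges need not be back edges of this DFS. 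The paper gives no proof of the lemma (it is quoted from~\cite{KKR08}), and in the introduction it uses the lemma only for DFS trees started at $s$. Under that reading, your structural claim already finishes the job: every DFS-tree path from $s$ is simple, hence equals $\pi(\cdot)$. So the DFS tree is $T$, and every non-tree edge goes to a DFS-ancestor, i.e., is a back edge. No white-path or splicing argument is needed.

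In ($\Leftarrow$) your back-edge test is reversed: $(q,c)$ is a back edge when $c$ is a DFS-ancestor of $q$, not when $q$ is an ancestor of $c$. Having shown that $q$ is not an ancestor of $c$, you can only conclude that $(q,c)$ is a back or cross edge (it cannot be forward). The back case does occur. Take $s=a$ and edges $a\to x$, $x\to c$, $a\to q$, $q\to c$, $c\to q$; exploring $Q_1=a\,x\,c$ first makes $(c,q)$ a tree edge and $(q,c)$ a back edge, and the offending edge is instead the forward edge $(a,q)$.

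The clean fix is the following observation. If a DFS rooted at $r$ has only tree and back edges, then every simple path $r=z_0,z_1,\dots,z_k$ is a tree path. By induction, $z_0,\dots,z_i$ is the tree path to $z_i$; a back edge $(z_i,z_{i+1})$ would then land on some earlier $z_j$, contradicting simplicity, so $(z_i,z_{i+1})$ must be a tree edge. Apply this with $r=a$ to $Q_1$ and $Q_2$, or directly with $r=s$ to $P_1$ and $P_2$, to get the contradiction. The second option needs neither $a$, $c$, nor a special edge order, and shows that a single DFS from $s$ suffices for this direction.
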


\noindent
\paragraph*{Complexity Preliminaries and Definitions}
We begin by recalling two unambiguous logspace complexity classes that play a central role in our results. The class $\UL$ captures nondeterministic logspace computations with a unique accepting computation path, while $\RUL$ imposes the stronger requirement of unambiguity, allowing at most one computation path from the start configuration to any configuration. These notions are formalized below.
\begin{definition}[$\UL$]
	$\UL$ is the complexity class containing decision problems that can be solved by a nondeterministic TM using a logarithmic amount of memory space and on every input have at most one computation path from the start configuration to any accepting configuration.
\end{definition}

\begin{definition}[$\RUL$ \cite{BGL91}]
	$\RUL$ is the complexity class containing decision problems that can be solved by a nondeterministic TM using a logarithmic amount of memory space and on every input have at most one computation path from the start configuration to any other configuration.
\end{definition}

Now, we consider the language $L_{ru}$ defined in \cite{LAN97} where it was shown that it is $\RUL$-complete.
\begin{definition}[$L_{ru}$\label{def:lru}]
	The language $L_{ru}$ describes the family of graphs $G$ together with a vertex labeling of $G$ and a source vertex $s$ with the property that from the source node $s$ to every other vertex in $G$, there is at most one path. Moreover, the vertex labeling marks each node as either a leaf or an inner node, with children specified as left or right.
\end{definition}

For a structured family of graphs $\mathcal{G}$, we consider the problem of checking whether a graph belongs to a graph family $\mathcal{G}$ (called $\GREC$) as well as the graph reachability problem restricted to this family of graphs ($\GREACH$) defined as follows.
\begin{definition}
	$\GREC = \{ G \mid G \text{ is a graph and } G \in \mathcal{G} \}$
\end{definition}
\begin{definition}
	$	\GREACH = \{ (G,s,t) \mid G \in \GREC\text{ and there is an } s \text{ to } t \text{ path in} G\}$
\end{definition}

\section{Reachability in Simple Path Graphs is in $\UL \cap \co\UL$}\label{sec:simplepath_mainresult}

In this section, we show that reachability in simple path graphs is in $\UL \cap \co\UL$ (\cref{Thm:SPReachULCoUL}). Before going into the details of the algorithm, we give a brief background for the problem. 

Note that the simple path graphs introduced by \cite{KKR08} is a strict generalisation of the graphs considered in $L_{ru}$ (see~\cref{def:lru}), which is the set of all graphs such that there is at most one path from start to any vertex in the graph. \cite{KKR08} showed that for any $0<\epsilon<1$, recognition as well as the reachability problem in simple path graphs can be done in $\poly(n)$ time and $\tilde{O}(n^{\epsilon})$ space where the $\tilde{O}$ hides the polynomial in $\log(n)$ factors. In this section, we build on an algorithm due to Lange \cite{LAN97}. Lange's algorithm checks whether $G$ belongs to $L_{ru}$, reach unambiguously, and we extend it to our setting. In the next section, we start by discussing how the back edges are detected unambiguously.

\subsection{Detecting Back Edges Unambiguously 
	(\cref{alg:isleftchildbackedge})} \label{sec:back-edge-detect}

In this section, we first describe the procedure $\ISLBACK$ that takes in a vertex $y$ and its height $h$ in the subgraph rooted at $s$ and checks whether $(y, L(y))$ or $(y, R(y))$ is a back edge, respectively. Let $G_h(s)$ denotes this subgraph. The procedure assumes $G_h(s)$ is a simple path graph and the correct height of $y$ is $h$. These checks are not a part of the procedure. The algorithm for $\ISRBACK$ is exactly same as that of $\ISLBACK$ except that in line \ref{alg:isleftchildbackedge:4}, $L(y)$ is replaced by $R(y)$.

\begin{algorithm}[htp!]
	\caption{$\ISLBACK(y,h)$}
	\label{alg:isleftchildbackedge}	
	$flag \gets False$, $w \gets s$\;
	\For{$count \gets 0 $ to $h-1$}{\label[line]{alg:isleftchildbackedge:2}
		
		\If{$w == L(y)$\label{alg:isleftchildbackedge:4}}{
			$flag \gets True$\;
		}
		
		$w \gets$ a neighbor of $w$ guessed nondeterministically\;\label{alg:isleftchildbackedge:6}
		
	}
	\If{$w == y$}{
		Return $flag$\;
	}
	\Else{
		Abort\label{alg:isleftchildbackedge:11}\;
	}
\end{algorithm}

\begin{proposition}\label{prop:backedge}
Given a graph $G$, vertex $y$ and its height $h$ with $G_h(s)$ being a simple path graph, if $(y,L(y))$ is a back edge in the subgraph $G_h(s)$ then $\ISLBACK$ returns True or Aborts. Conversely, if $\ISLBACK$ returns True, then $(y,L(y))$ is a back edge in the subgraph $G_h(s)$. Moreover, $\ISLBACK$ runs unambiguously.
\end{proposition}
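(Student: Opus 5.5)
The plan is to analyze the computation paths of $\ISLBACK(y,h)$ through the walks they trace in $G$. Every run performs exactly $h$ nondeterministic neighbor guesses starting at $w=s$, so each run corresponds to a walk $s=w_0,w_1,\dots,w_h$ in $G$. The run returns (rather than aborting) exactly when $w_h=y$. The returned value is True iff $L(y)\in\{w_0,\dots,w_{h-1}\}$.

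First I would handle the unambiguity claim. Since $h$ is the height (distance) of $y$ from $s$, any walk of length exactly $h$ from $s$ to $y$ is a shortest path, hence a simple path. In the simple path graph $G_h(s)$, and hence in $G$ for paths staying within distance $h$, there is at most one simple path from $s$ to $y$. Every vertex on a shortest $s$-$y$ path lies at distance at most $h$, so it lies in $G_h(s)$. Thus at most one computation path reaches the return step, which gives unambiguity of the non-aborting outcomes. Since the runs differ only in their guessed walks, distinct configurations at the return line come from distinct walks, so the True/False outputs are unambiguous. Aborting paths may be many, which matches the statement's allowance.

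Next, the converse direction. Suppose $\ISLBACK$ returns True. Then there is a walk $w_0=s,\dots,w_h=y$ that is the unique (shortest) simple path to $y$, and $L(y)=w_i$ for some $i<h$. I would argue that this path is the tree path from $s$ to $y$ in any DFS tree of $G_h(s)$: the tree path is a simple path, and simple paths are unique. So $L(y)$ is a proper ancestor of $y$, and the edge $(y,L(y))$ points to an ancestor, i.e., it is a back edge in $G_h(s)$. Here I would note that it cannot be a tree edge, since $L(y)$ has smaller depth than $y$.

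For the forward direction, suppose $(y,L(y))$ is a back edge in $G_h(s)$. Then $L(y)$ is an ancestor of $y$, so it lies on the tree path from $s$ to $y$. By uniqueness of simple paths this tree path is the unique simple $s$-$y$ path, and it has length $h$. If the guesses follow this path, $w_i=L(y)$ for some $i<h$, so the flag is set and the run returns True. Every other run has $w_h\neq y$ (it cannot reach $y$ in exactly $h$ steps by a different walk) and so aborts. Hence the outcome is True or Abort.

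The main obstacle is the identification of the tree path with the unique simple path of length $h$. This needs care about $h$ being the BFS distance versus the DFS depth. I would use \cref{lemma:char-of-simple-path}: the absence of forward and cross edges forces the DFS tree path to be the unique simple path. That path must then be the shortest one, so the DFS depth equals the distance $h$.
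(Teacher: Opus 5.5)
Your proposal is correct and follows essentially the paper's argument. Runs are length-$h$ walks from $s$; a run that returns must trace the unique shortest, hence simple, $s$--$y$ path, which gives unambiguity; and the flag records exactly whether $L(y)$ is a proper ancestor of $y$ on that path. You are somewhat more careful than the paper: you check that the walk stays inside $G_h(s)$ and that the DFS depth of $y$ equals $h$. You also argue the converse directly rather than by contrapositive. Appealing to \cref{lemma:char-of-simple-path} at the end is unnecessary, since a DFS tree path is already a simple path and its uniqueness follows directly from the definition of a simple path graph.
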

\begin{proof}

We start by arguing the correctness of $\ISLBACK$. The argument for the correctness of $\ISRBACK$ is similar.

Suppose $(y,L(y))$ is a back edge in the subgraph $G_h(s)$. Then there exists a path from $s$ to $y$ of length $h$ such that $L(y)$ will be visited before $y$ is reached. The lines \ref{alg:isleftchildbackedge:2}-\ref{alg:isleftchildbackedge:6} precisely does this check and the flag will be set to True in that nondeterministic choice. Hence, the procedure returns True if the path ends in $y$. If the path does not end in $y$, since $h$ is the length of the shortest path from $s$ to $y$, it must be that the nondeterministic choices took at least one back edge during the traversal. In such cases, the procedure aborts in line \ref{alg:isleftchildbackedge:11}. 

We now argue that if $\ISLBACK$ returns True, then $(y,L(y))$ is a back edge in the subgraph $G_h(s)$. To argue this, suppose $(y,L(y))$ is not a back edge. Then, in any path from $s$ to $y$, $L(y)$ will not be visited. Hence, in lines \ref{alg:isleftchildbackedge:2}-\ref{alg:isleftchildbackedge:6}, the flag is never set to True in all nondeterministic choices. Hence, the procedure returns False if the path ends in $y$. If the path does not end in $y$, since $h$ is the length of the shortest path from $s$ to $y$, it must be that the nondeterministic choices took at least one back edge during the traversal. In such cases, the procedure aborts in line \ref{alg:isleftchildbackedge:11}.

\paragraph{Unambiguity of $\ISLBACK$: }We argue the unambiguity of the procedure when it returns True or False. We recall that $G_h(s)$ is already a simple path graph and $h$ is the length of the shortest path from $s$ to $y$. We start with the case when the function returns True. Suppose the $\ISLBACK$ on $(y,h)$ returns True for two nondeterministic choices. This corresponds to two walks from $s$ to $y$. We claim that both of them must be \textit{paths} of length $h$. To see this, suppose one of them is a walk of length $h$ where a vertex $v$ is repeated (such a $v$ must exist). Then we can obtain a path of length strictly smaller than $h$ by removing the vertices that appear between two consecutive $v$. We repeat this for every such $v$. This shows that $y$ is at a distance strictly smaller than $h$, which is a contradiction.  

With two paths of length $h$ from $s$ to $y$, the graph is no longer a simple path graph, which contradicts that $G_h(s)$ is a simple path graph. The argument for the unambiguity of $\ISRBACK$ returning True is similar.

Now, suppose that the procedure returns False. The same argument holds verbatim with ``True" replaced by ``False". Since the procedure returns True and False unambiguously, the procedure is in $\UL \cap \co\UL$.
\end{proof}
\begin{remark}
The above procedures are not reach unambiguous as it is possible to have multiple walks in the graph that reach the same vertex (different from $y$) and end up aborting ambiguously. 
\end{remark}

\subsection{\texorpdfstring{$\MAIN$ Procedure (\cref{alg:main}): Correctness and Unambiguity}{MAIN Procedure: Correctness and Unambiguity}}\label{sec:main-procedure}
In this section, we describe the procedure $\MAIN$. We start with an overview, followed by a detailed description. This procedure takes in a graph $G$, two vertices, $s$ and $t$, and accepts if and only if $G$ is a simple path graph and there is a unique path from $s$ to $t$ in $G$. For each value of $d$, this procedure proceeds by verifying that up to a depth $d$, the given graph $G$ is a simple path graph. 

During this traversal, for each value of the depth, two pointers ($y$ and $y'$) are maintained. The procedure simultaneously checks two properties. First, it checks whether $t$ appears as a neighbor of any visited vertex. Second, for each $y$ at height $d$, it does an inner traversal using a second pointer $y'$ to verify that $G$ does not violate the conditions for a simple path graph for $G_d(s)$. Specifically, the inner traversal checks that $(y,y')$ not a cross edge or a forward edge, and no two internal vertices at depth $d$ share a common neighbor. After we traverse the whole of the graph and all depth values have been processed, the algorithm accepts if and only if $reached$ is True.

For the traversal, this procedure invokes $\NEXT$ that takes in a vertex $y$, its height $h$, and a depth $d$ and returns the next vertex after $y$ in the pre-order traversal of $G_d(s)$. The details of $\NEXT$ as well as the correctness and unambiguity will be explained in the next section (\cref{sec:next-procedure}).

The description of $\MAIN$ is given in~\cref{alg:main}.

\begin{algorithm}[h]
\caption{ $\MAIN(G,s,t)$}\label{alg:main}
$reached \gets$ False, $n \gets |V(G)|$\;
\For{$d \gets 1 \dots n-2$}{
	$(y,h) \gets (s, 0)$\;
	\Repeat{$(y, h) = (s, 0)$}{
		\If{$t \in \{L(y), R(y)\}$}{
			$reached \gets$ True\;}\label{main:6}
		\If{$h == d$ and $y$ is internal}{
			$(y', h') \gets (s, 0)$\;
			\Repeat{$(y', h') = (s, 0)$}{
				\If{$y' \in \{L(y), R(y)\}$}{
					\If{$\neg (\ISLBACK(y,h)$ or $\ISRBACK(y,h))$}{
						
						Reject\;}}\label{main:12}
				\If{$h' == d$ and $y \neq y'$ and $y'$ is internal and $\{L(y'), R(y')\} \cap \{L(y), R(y)\} \neq \emptyset$}{
					Reject\;}\label{main:14}
				$(y',h')=\NEXT(y', h', d)$\;
			}
		}
		$(y,h)=\NEXT(y, h, d)$\;
	}
}
\If{$reached$}{
	Accept\;}
\Else{
	Reject\;}	
	\end{algorithm}

	To argue correctness, we show the following.
	\begin{proposition}\label{prop:main}
Given a graph $G$ and two distinct vertices $s$ and $t$ in $V(G)$, $(G,s,t) \in \SPREACH$ iff $\MAIN$ accepts. Moreover, $\MAIN$ runs unambiguously in Accept and Reject.
\end{proposition}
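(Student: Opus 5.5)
The proof will be an induction on the depth parameter $d$, organised around the invariant that at the start of round $d$ the graph $G_{d-1}(s)$ is a simple path graph. Under this invariant, $\NEXT$ (whose correctness and unambiguity are taken from \cref{sec:next-procedure}) enumerates the vertices of $G_d(s)$ in pre-order with their correct heights. Along any computation that does not abort, it does so along a unique path of nondeterministic choices. We will combine this with \cref{prop:backedge} for $\ISLBACK$ and $\ISRBACK$ and with the characterization \cref{lemma:char-of-simple-path}.

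\textbf{Invariant maintenance.} Assume the traversal of $G_d(s)$ is correct. We show that round $d$ completes without rejecting iff $G_{d+1}(s)$ is a simple path graph.

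For the direction ``if $G_{d+1}(s)$ is not a simple path graph, some check rejects'', we use \cref{lemma:char-of-simple-path}. There is then a forward or cross edge with respect to the implicit DFS tree, and we take one whose tail $y$ has the smallest depth. Since $G_d(s)$ is already fine by induction, the offending edge leaves a vertex $y$ at height exactly $d$. The edge $(y,y')$ lands either on an already enumerated vertex $y'\in G_d(s)$ or on a fresh vertex at depth $d+1$.
\begin{itemize}
\item In the first case, $y'$ is visited in the inner traversal and $(y,y')$ is not a back edge. By the converse direction of \cref{prop:backedge}, both back-edge tests return False, so line~\ref{main:12} rejects.
\item In the second case, the new vertex has two distinct depth-$d$ parents $y\neq y'$. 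The shared-child test at line~\ref{main:14} rejects.
\end{itemize}

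Conversely, if $G_{d+1}(s)$ is a simple path graph, every edge from depth $d$ back into $G_d(s)$ is a back edge, and no two depth-$d$ vertices share a child. By \cref{prop:backedge}, a non-aborting call to the back-edge tests then returns True, so no rejection occurs.

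After round $n-2$, $G_{n-1}(s)$ contains everything reachable from $s$, so the invariant gives that $G$ is a simple path graph. Separately, $reached$ is set exactly when $t$ appears as a child of some enumerated vertex, that is, exactly when $t$ is reachable from $s$. Together these give $(G,s,t)\in\SPREACH$ iff $\MAIN$ accepts.

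\textbf{Unambiguity.} Unambiguity in Accept and Reject follows from the structure of the computation: any accepting or rejecting computation is a concatenation of non-aborting sub-calls to $\NEXT$, $\ISLBACK$ and $\ISRBACK$. The caller's deterministic state determines each sub-call's input. Each sub-call has at most one non-aborting path returning a given value, by \cref{prop:backedge} and the analogous statement for $\NEXT$. Hence the whole path to an accept or reject configuration is unique.

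\textbf{Main obstacle.} The subtle point is the rejecting round. During the round that rejects, $G_{d+1}(s)$ is not a simple path graph, whereas \cref{prop:backedge} assumes that $G_h(s)$ is one for $h=d$. We must check that the hypothesis is only needed on $G_d(s)$ and the edge set out of $y$: the paths of length $h=d$ to $y$ stay inside $G_d(s)$, which is fine by induction. Similarly, we must confirm that $\NEXT$ on $G_d(s)$ depends only on $G_d(s)$ being simple-path and not on deeper structure, so that the rejection itself is reached along a unique path.

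A second delicate point is $reached$: if $t$ first appears only as a back-edge target, that is fine, since $t$ is then reachable anyway. The distinctness of $s$ and $t$ handles the trivial case $t=s$.
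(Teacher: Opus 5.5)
Your case split matches the right characterization. The reachable part of $G$ is a simple path graph iff (i) every vertex at distance $i+1$ from $s$ has a unique in-neighbour at distance $i$, and (ii) every edge from a vertex at distance $i$ to a vertex at distance at most $i$ ends at an ancestor (possibly itself) in the resulting tree. The genuine gap is that lines~\ref{main:12} and~\ref{main:14} of $\MAIN$, as written, do not test these conditions. Both places where you assert that they do are false, and each yields a counterexample to the statement.

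\textbf{Soundness, first case.} \cref{prop:backedge} only tells you that the test for the child equal to $y'$ returns False. The test for the other child can legitimately return True, and line~\ref{main:12} rejects only when $\neg(\ISLBACK(y,h)\lor\ISRBACK(y,h))$ holds. Take $L(s)=a$, $R(s)=b$, $L(a)=s$, $R(a)=b$, with $b$ a leaf and $t=b$; then $n=3$ and only $d=1$ is run.
Here $G$ has the two simple paths $s\,b$ and $s\,a\,b$, and the traversal of $G_1(s)$ is $s,a,b$.
\begin{itemize}
\item $\ISLBACK(a,1)$ returns True on its only non-aborting run $s\to a$, so line~\ref{main:12} never fires at $y=a$.
\item Line~\ref{main:14} is vacuous because $b$ is a leaf.
\item $reached$ is set at $y=s$, so $\MAIN$ accepts.
\end{itemize}

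\textbf{Completeness.} The claim that no two depth-$d$ vertices share a child is false for simple path graphs, because two back edges may have the same target. Line~\ref{main:14} intersects the full child sets. Take $L(s)=a$, $R(s)=b$, $L(a)=L(b)=s$, $R(a)=c$, $R(b)=e$, with $c,e$ leaves and $t=c$; this is the configuration of \cref{fig:nextreturn1}. It is a simple path graph with $t$ reachable, and $G_1(s)$ is itself a simple path graph, so every subroutine guarantee applies. Nevertheless, in round $d=1$ the pair $y=a$, $y'=b$ makes line~\ref{main:14} reject.

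The paper's proof does not address either point; for completeness it simply asserts that no reject check fires. So the defect lies in $\MAIN$ itself. Your argument goes through once two changes are made:
\begin{itemize}
\item line~\ref{main:12} tests the specific child, i.e.\ $y'=L(y)$ with $\neg\ISLBACK(y,h)$, or $y'=R(y)$ with $\neg\ISRBACK(y,h)$;
\item line~\ref{main:14} ignores shared back-edge targets and fires only on a depth-$(d+1)$ vertex with two parents.
\end{itemize}

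Even after that repair, three steps need tightening.
\begin{itemize}
\item \textbf{The invariant.} Round $d$ inspects only edges leaving depth-$d$ vertices. What it certifies is that $G_d(s)$ is a simple path graph and every depth-$(d+1)$ vertex has a unique parent, not that $G_{d+1}(s)$ is a simple path graph. With this invariant your ``main obstacle'' disappears. Every subroutine call in round $d$ uses a height $h\le d$, and a walk of length $h$ from $s$ to a vertex at distance $h$ is a shortest path, which is unique by the unique-parent property.
\item \textbf{The last round.} Round $d=n-2$ leaves the edges out of a depth-$(n-1)$ vertex unchecked. You need the remark that such a vertex exists only when the tree is a single path through all $n$ vertices, so all of its out-neighbours are ancestors.
\item \textbf{Unambiguity.} It is not enough that each subcall has at most one non-aborting run per returned value. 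Runs returning different values could continue to the same Reject configuration. You need at most one non-aborting run in total, which is what the shortest-path argument actually gives.
\end{itemize}
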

\begin{proof}

Suppose that $(G,s,t) \not \in \SPREACH$. Then, it could be that either $G$ is not a simple path graph or $G$ is a simple path graph but there is no $s$ to $t$ path. We start with the former case.

Let $p_1$ and $p_2$ be two different simple paths from $s$ to a vertex $v$ in a graph $G$. In the figures that follow (\cref{fig:p1_p2_diff_length,fig:p1_p2_same length}), they are as dashed and dotted respectively. Without loss of generality, let $v$ be the first point of convergence of $p_1$ and $p_2$. Let $y$ appear on $p_1$ as a neighbor of $v$ and $y'$ appear on $p_2$ as a neighbor of $v$. The paths $p_1$ and $p_2$ can either be of the same length or of different length. On one hand, if they are of the same length (see \cref{fig:p1_p2_same length}), either $(y,v)$ or $(y',v)$ forms a cross edge and the graph is rejected in line \ref{main:14}. 

On the other hand, if the paths are of different lengths, without loss of generality, say $y'$ have less height than $y$. Suppose, for a contradiction, the algorithm accepts. Then, the ``Reject'' at line \ref{main:12} was never executed. . We do not consider the ``Reject'' at line \ref{main:14} since this is only executed when $p_1$ and $p_2$ are of same length.

There could be two reasons why the``Reject'' at line \ref{main:12} was never executed. (1) $(y,y')$ is a back edge and $y'$ is a child of $y$,  and (2) $y'$ is not a child of $y$.
For case (1), if $(y,y')$ is a back edge, the path $p_1$ and $p_2$ meet at $y'$ for the first time and not $v$ (see \cref{fig:p1_p2_diff_length}). This is a contradiction as $y'$ is different from $v$. Then the algorithm must have rejected in this case. Now, for case (2), if $y'$ is not a child of $y$, we have to show that the algorithm rejects. Since the inner loop compares $y$ with every $y'$ of height $h' \leq$ height of $y$, for some $y$, $y'$ becomes $v$ and $y$ is the parent of $y'$. Now, $(y,y')$ is not a tree edge (since $y$ and $y'$ appear on different paths) and not a cross edge (which would lead to reject at line \ref{main:12}), it must be a back edge, leading to a ``Reject'' in line \ref{main:12}.

\begin{figure}[htp!]
	\centering
	\begin{minipage}{.45\textwidth}
		
		\includegraphics[width=.85\linewidth]{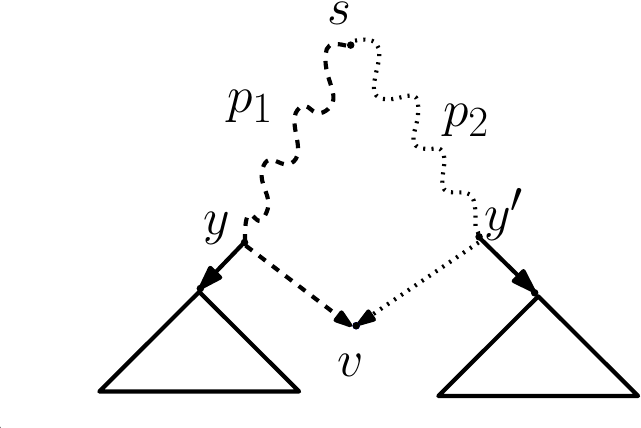}
		\caption{$p_1$ (shown as dashed), $p_2$ (shown as dotted) are of same length}
		\label{fig:p1_p2_same length}
		\vspace{-2em}
	\end{minipage}
	\hspace{1.5em}
	\begin{minipage}{.45\textwidth}
		
		\includegraphics[width=.8\linewidth]{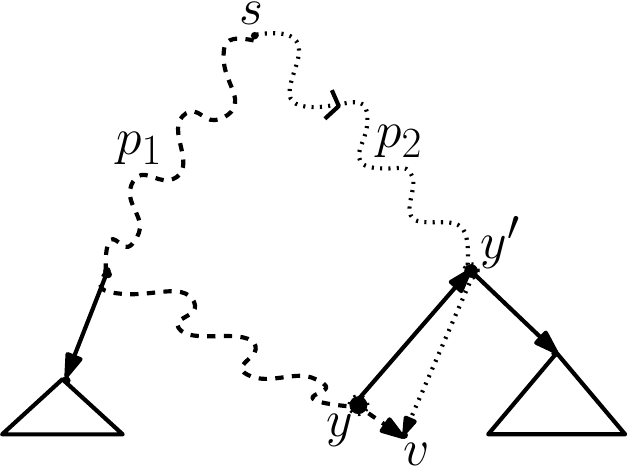}
		\caption{$p_1$ (shown in dashed), $p_2$ (shown in dotted) are of different length.}
		\label{fig:p1_p2_diff_length}
	\end{minipage}
\end{figure}
Now, consider the case where $G$ is a simple path graph, but there is no $s$ to $t$ path. Observe that the algorithm visits every vertex reachable from $s$ in $G$. This is because the $\NEXT$ procedure is guaranteed to find the pre-order successor of the current vertex. Hence, as $t$ was not reachable, the algorithm would reject.

Suppose $(G,s,t) \in \SPREACH$. Then, $G$ is a simple path graph and there is an $s$ to $t$ path. In this case, none of the reject checks succeed, and for some vertex $y$ reachable from $s$, the vertex $t$ will be a child of $y$. The algorithm is guaranteed to find such a $y$ because $\NEXT$ gives the pre-order successor of the current vertex every time it is called. Consequently, $reached$ is set to True in line \ref{main:6}. Hence, the algorithm accepts.

\paragraph{Unambiguity of $\MAIN$: }The main procedure does not use any nondeterminism directly and uses nondeterminism only in the subroutines. Observe that all the subroutines ($\ISLBACK$ and $\ISRBACK$) are unambiguous in both ``yes'' and ``no'' instances by \cref{prop:backedge}.  Moreover, $\NEXT$ returns the pre-order successor of $y$ and its height in the subgraph $G_d(s)$  unambiguously by \cref{prop:next}. Hence, in $\MAIN$, there cannot be two nondeterministic choices that ends in Accept. Same holds for Reject. 
\end{proof}
\begin{proof}[Proof of \cref{Thm:SPReachULCoUL}]
To show that $\SPREACH \in \UL \cap \co\UL$, we see that $\MAIN$ accepts $\SPREACH$ and the accept and reject are unambiguous by \cref{prop:main}. This completes the proof.
\end{proof}
\subsection{\texorpdfstring{$\NEXT$ Procedure (\cref{alg:next}): Correctness and Unambiguity}{NEXT Procedure: Correctness and Unambiguity}} \label{sec:next-procedure}

In this section, we describe the procedure $\NEXT$ as \cref{alg:next}. This procedure takes in a vertex $y$ and two numbers $h$ and $d$ where it is guaranteed that the height of $y$ in the subgraph $G_d(s)$ is $h$. This procedure returns the pre-order successor of $y$ in the subgraph $G_d(s)$. The procedure assumes that $h$ is the correct height of $y$ and $G_d(s)$ is a simple path graph. 

Now, let us briefly see how $\NEXT$ returns a vertex. We consider three cases for a vertex $y$ at height $h<d$. In \cref{fig:nextreturn1,fig:nextreturn2,fig:nextreturn3,fig:nextreturn4}, the vertex highlighted with a gray fill and black border denotes the current vertex $y$, while the vertex with a white fill and black border denotes the vertex returned by $\NEXT(y, h, d)$. The shaded region indicates vertices that are unreachable from $s$. We now describe the cases.
\begin{description}
\item[Case 1:] Left child of $y$ is not a back edge. In this case, $L(y)$ is returned as the pre-order successor (\cref{fig:nextreturn1}). 

\item[Case 2:] Left child $y$ is a back edge, but the right child is not. In this case, $R(y)$ is returned as the pre-order successor (\cref{fig:nextreturn2}).

\item[Case 3:] Both children of $y$ are back edges. Then, pre-order successor of $y$ in $G_d(s)$ is returned (\cref{fig:nextreturn3}).

\item[Case 4:] Now, suppose $y$ is at height $h=d$. Even if $y$ has children that are at a greater depth than $d$, $\NEXT$ does not return it. Instead $\NEXT$ returns the pre-order successor of $y$ in $G_d(s)$ (\cref{fig:nextreturn4}).

\end{description}
The algorithm is described in \cref{alg:next}. Following this, we argue the correctness.

\begin{figure}
\centering

\begin{minipage}{0.49\linewidth}
	\centering
	\begin{tikzpicture}[
		node/.style={circle, fill=black, inner sep=1.2pt},
		y_node/.style={circle, draw=black, fill=gray!60, inner sep=2pt},
		empty_node/.style={circle, draw=black, fill=white, inner sep=2pt},
		edge/.style={-Stealth, thick}
		]
		
		\filldraw[fill=gray!40, draw=gray] (3.5, 2.1) circle (0.8cm);
		
		\node[node] (b1) at (3.5, 2.5) {};
		\node[node] (b2) at (3, 1.8) {};
		\node[node] (b3) at (4, 1.8) {};
		\draw[edge] (b1) -- (b2);
		\draw[edge] (b2) -- (b3);
		\draw[edge] (b3) -- (b1);
		
		\node[node, label=above:$s$] (s) at (1.2, 2.5) {};
		\node[y_node, label=left:$y$] (y) at (0.5, 1.5) {};
		\node[empty_node] (next) at (0.5, 0.5) {};
		\node[node] (mid) at (2.2, 1.5) {};
		\node[node] (depth_node) at (2.2, 0.5) {};
		
		\draw[edge] (s) to (y);
		\draw[edge, bend right=30] (y) to (s);
		\draw[edge] (s) to (mid);
		\draw[edge, bend left=30] (mid) to (s);
		\draw[edge] (y) -- (next);
		\draw[edge,bend right=20] (b1) to (mid);
		\draw[edge] (b2) to (mid);
		\draw[edge] (mid) -- (depth_node);
		\draw[edge, bend left=20] (b3) to (depth_node);
		
		\draw[dashed] ([xshift=-0.5cm]next.center) to node[below, pos=1, xshift=0cm] {Depth $d$} ([xshift=0.8cm]depth_node.center);
		\node[anchor=north] at (next.south) {$\NEXT(y, h, d)$};
		
	\end{tikzpicture}
	\caption{Case 1: When left child of $y$ is not a back edge.}
	\label{fig:nextreturn1}
\end{minipage}
\hfill
\begin{minipage}{0.49\linewidth}
	
	\begin{tikzpicture}[
		node/.style={circle, fill=black, inner sep=1.2pt},
		y_node/.style={circle, draw=black, fill=gray!60, inner sep=2pt},
		empty_node/.style={circle, draw=black, fill=white, inner sep=2pt},
		edge/.style={-Stealth, thick}
		]
		
		\filldraw[fill=gray!40, draw=gray] (3.5, 2.1) circle (0.8cm);
		
		\node[node] (b1) at (3.5, 2.5) {};
		\node[node] (b2) at (3, 1.8) {};
		\node[node] (b3) at (4, 1.8) {};
		\draw[edge] (b1) -- (b2);
		\draw[edge] (b2) -- (b3);
		\draw[edge] (b3) -- (b1);
		
		\node[node, label=above:$s$] (s) at (1.2, 2.5) {};
		\node[node] (y) at (0.5, 1.5) {};
		\node[node] (next) at (0.5, 0.5) {};
		\node[y_node, label=below left:$y$] (mid) at (2.2, 1.5) {};
		\node[empty_node] (depth_node) at (2.2, 0.5) {};
		
		\draw[edge] (s) to (y);
		\draw[edge, bend right=30] (y) to (s);
		\draw[edge] (s) to (mid);
		\draw[edge, bend left=30] (mid) to (s);
		\draw[edge] (y) -- (next);
		\draw[edge,bend right=20] (b1) to (mid);
		\draw[edge] (b2) to (mid);
		\draw[edge] (mid) -- (depth_node);
		\draw[edge, bend left=20] (b3) to (depth_node);
		
		\draw[dashed] ([xshift=-0.5cm]next.center) to node[below, pos=0, xshift=-0.3cm] {Depth $d$} ([xshift=0.8cm]depth_node.center);
		\node[anchor=north] at (depth_node.south) {$\NEXT(y, h, d)$};
		
	\end{tikzpicture}

	\caption{Case 2: When left child of $y$ is a back edge, but the right child is not.}
	\label{fig:nextreturn2}
\end{minipage}

\end{figure}
\begin{figure}
\centering
\begin{minipage}{0.49\linewidth}
	\centering
	\begin{tikzpicture}[
		node/.style={circle, fill=black, inner sep=1.2pt},
		y_node/.style={circle, draw=black, fill=gray!60, inner sep=2pt},
		empty_node/.style={circle, draw=black, fill=white, inner sep=2pt},
		edge/.style={-Stealth, thick}
		]
		
		\filldraw[fill=gray!40, draw=gray] (3.5, 2.1) circle (0.8cm);
		
		\node[node] (b1) at (3.5, 2.5) {};
		\node[node] (b2) at (3, 1.8) {};
		\node[node] (b3) at (4, 1.8) {};
		\draw[edge] (b1) -- (b2);
		\draw[edge] (b2) -- (b3);
		\draw[edge] (b3) -- (b1);
		
		\node[node, label=above:$s$] (s) at (1.2, 2.5) {};
		\node[y_node, label=left:$y$] (y) at (0.5, 1.5) {};
		\node[empty_node, label={below=2mm, xshift=2mm:$\NEXT(y, h, d)$}] (mid) at (2.2, 1.5) {};
		\node[node] (depth_node) at (2.2, 0.5) {};
		
		\draw[edge] (s) to (y);
		\draw[edge, bend right=30] (y) to (s);
		\draw[edge] (s) to (mid);
		\draw[edge, bend left=30] (mid) to (s);
		\draw[edge, bend left=30] (y) to (s);
		\draw[edge,bend right=20] (b1) to (mid);
		\draw[edge] (b2) to (mid);
		\draw[edge] (mid) -- (depth_node);
		\draw[edge, bend left=20] (b3) to (depth_node);
		
		\draw[dashed] ([xshift=-0.5cm]next.center) to node[below, pos=1, xshift=0cm] {Depth $d$} ([xshift=0.8cm]depth_node.center);
		
	\end{tikzpicture}
	\caption{Case 3: When both children of $y$ are back edges.}
	\label{fig:nextreturn3}
\end{minipage}
\hfill
\begin{minipage}{0.49\linewidth}
	\centering
	
	\begin{tikzpicture}[
		node/.style={circle, fill=black, inner sep=1.2pt},
		y_node/.style={circle, draw=black, fill=gray!60, inner sep=2pt},
		empty_node/.style={circle, draw=black, fill=white, inner sep=2pt},
		edge/.style={-Stealth, thick}
		]
		
		\filldraw[fill=gray!40, draw=gray] (3.5, 2.1) circle (0.8cm);
		
		\node[node] (b1) at (3.5, 2.5) {};
		\node[node] (b2) at (3, 1.8) {};
		\node[node] (b3) at (4, 1.8) {};
		\draw[edge] (b1) -- (b2);
		\draw[edge] (b2) -- (b3);
		\draw[edge] (b3) -- (b1);
		
		\node[node, label=above:$s$] (s) at (1.2, 2.5) {};
		\node[node] (y) at (0.5, 1.5) {};
		\node[y_node,label=above left:$y$] (next) at (0.5, 0.5) {};
		\node[empty_node, label={below=2mm, xshift=2mm:$\NEXT(y, h, d)$}] (mid) at (2.2, 1.5) {};
		\node[node] (depth_node) at (2.2, 0.5) {};
		
		\draw[edge] (s) to (y);
		\draw[edge, bend right=30] (y) to (s);
		\draw[edge] (s) to (mid);
		\draw[edge, bend left=30] (mid) to (s);
		\draw[edge] (y) -- (next);
		\draw[edge,bend right=20] (b1) to (mid);
		\draw[edge] (b2) to (mid);
		\draw[edge] (mid) -- (depth_node);
		\draw[edge, bend left=20] (b3) to (depth_node);
		
		\draw[dashed] ([xshift=-0.5cm]next.center) to node[below, pos=1, xshift=0cm] {Depth $d$} ([xshift=0.8cm]depth_node.center);
		
	\end{tikzpicture}
	\caption{$y$ is at height $h=d$.}
	\label{fig:nextreturn4}
\end{minipage}

\end{figure}

\begin{figure}[ht!]
\begin{minipage}{0.53\columnwidth}
	\begin{algorithm}[H]
		\caption{ $\NEXT(y,h,d)$}\label{alg:next}	
		\If{$y$ is internal and $\neg\ISLBACK(y,h)$ and $h < d$}{
			Return $(L(y), h+1)$\;
		}
		\Else{
			\If{$y$ is internal and $\neg\ISRBACK(y,h)$ and $h < d$}{
				Return $(R(y), h+1)$\;
			}
			\Else{
				$h' \gets 0$\;
				$w \gets s$\;
				$flag \gets$ False\;
				\While{$w$ is internal and $h' \leq d$ and $flag == $\textrm{False}}{\label{alg:next:11}
					\If{$\neg\ISRBACK(w,h)$}{
						$w \gets R(w)$\;
						$h' \gets h' + 1$\;
					}
					\ElseIf{$\neg\ISLBACK(w,h)$}{
						$w \gets L(w)$\;
						$h' \gets h' + 1$\;
					}
					\Else{
						$flag \gets$ True\;
					}
				}\label{alg:next:22}
				\If{$y == w$}{
					Return $(s, 0)$\;
				}
				\Else{
					$(u,h_u) = \PATHUNCLE(y,h)$\;
					
					Return $(u,h_u)$
				}
			}
		}
	\end{algorithm}
	
\end{minipage}
\begin{minipage}{0.5\columnwidth}
	\begin{algorithm}[H]
		\caption{ $\PATHUNCLE(x,g)$}\label{alg:path_uncle}
		
		$u \gets R(s)$, $w \gets s$\;
		$count \gets 1$, $h_u \gets 1$, $h_w \gets 0$\;
		\While{$count \leq g$ and $w$ is an internal vertex}{ \label{alg:path_uncle:3}
			Guess $b \in \{0,1\}$\;
			\If{$b == 0$}{
				\If{$\neg\ISLBACK(w,h_w)$}{
					
					\If{$\neg\ISRBACK(w,h_w)$}{\label{alg:path_uncle:22}
						$u \gets R(w)$\;\label{alg:path_uncle:14}
						$h_u \gets h_w+1$\;
					}
					$w \gets L(w)$\;
					$h_w \gets h_w+1$\;
					
				}
			}
			\Else{
				\If{$\neg\ISRBACK(w,h_w)$}{
					$w \gets R(w)$\;
					$h_w\gets h_w+1$\;
				}
			}
			$count\gets count+1$\label{alg:path_uncle:16}\;
		}
		\If{$w==x$}{Return $(u,h_u)$\; }
		\Else{Abort\;\label{alg:path_uncle:19}}
	\end{algorithm}
	
\end{minipage}
\caption{Algorithms $\NEXT$ and $\PATHUNCLE$}
\end{figure}

\begin{proposition}\label{prop:next}
Given a graph $G$, vertex $y$, its height $h$ and depth $d$ of the subgraph $G_d(s)$, $\NEXT$ returns the pre-order successor of $y$ in the graph $G_d(s)$ and its height. Moreover, $\NEXT$ returns $y$ and its height unambiguously or aborts.
\end{proposition}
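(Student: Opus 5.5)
I will prove \cref{prop:next} by a case analysis that follows the branches of \cref{alg:next}. Throughout I take as hypotheses that $G_d(s)$ is a simple path graph and that $h$ is the correct height (shortest distance from $s$) of $y$. By \cref{lemma:char-of-simple-path}, $G_d(s)$ then has no forward or cross edges. Hence its non-back edges form a tree $T_d$ rooted at $s$, and the pre-order successor is the successor in $T_d$, with children ordered left before right.

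\textbf{Step 1: the first two branches.} These cover Cases 1 and 2. If $y$ is internal, $h<d$, and $(y,L(y))$ is not a back edge, then $L(y)$ is a tree child of $y$ at height $h+1$. It is therefore the pre-order successor. If instead $(y,L(y))$ is a back edge and $(y,R(y))$ is not, then $R(y)$ is the successor. In both cases correctness follows from \cref{prop:backedge}. That result says $\ISLBACK$ and $\ISRBACK$ return True exactly on back edges, return False otherwise, and do each of these along a unique computation path, aborting on every other path. Composing unambiguous subroutines sequentially keeps the whole computation unambiguous, because each returning branch is a unique path.

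\textbf{Step 2: the remaining branches.} These cover Cases 3 and 4, where $y$ has no tree child inside $G_d(s)$. The while loop in lines \ref{alg:next:11}--\ref{alg:next:22} is deterministic apart from its subroutine calls. It walks from $s$, always preferring the right tree child, and so computes the last vertex of $T_d$ in pre-order. If $y$ equals this vertex, the traversal wraps around and returning $(s,0)$ is correct. I will note that the loop passes the height $h$ where it should pass the current depth $h'$. I will read this as $h'$, since that is needed for \cref{prop:backedge} to apply.

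Otherwise the successor is the uncle of $y$ in the sense of \cref{def:uncle}. This is the right tree child of the deepest ancestor $w$ on the $s$--$y$ tree path such that the path continues through $L(w)$ and $(w,R(w))$ is not a back edge. \cref{alg:path_uncle} guesses this tree path bit by bit and records the candidate uncle at line \ref{alg:path_uncle:14}. I will argue this by induction on $count$: after $k$ correct guesses, $w$ is the depth-$k$ ancestor of $y$ on the tree path and $u$ is the uncle relative to that prefix. At the end $w=x$ exactly when all the guesses traced the tree path.

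\textbf{Step 3: unambiguity of $\PATHUNCLE$.} This is the main obstacle. Any guess sequence that reaches $w=x$ traces a walk from $s$ to $x$ of length at most $g$ that uses only non-back edges. Since $g$ is the shortest distance, this walk must in fact be a path. Since $G_d(s)$ is a simple path graph, that path is unique, so its bit sequence is unique. I expect two subtleties here. First, a guess may fail to move $w$ when the chosen child is a back edge, and then $count$ still increments. I must show that such stalled guesses can never still end at $x$, or else that they produce an identical configuration. I would first try to show that stalling makes the remaining budget too short to reach $x$ at the correct height. Second, the nested $\ISLBACK$ and $\ISRBACK$ calls must themselves be unambiguous, which needs $h_w$ to be the true height. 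This holds along the correct guess path.

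Combining the three steps gives that $\NEXT$ returns the pre-order successor and its height along exactly one computation path, and aborts on all others.
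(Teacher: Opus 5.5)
Your overall route is the paper's: the first two branches via \cref{prop:backedge}, the while loop for the last vertex, and $\PATHUNCLE$ for the uncle. The paper simply cites \cref{prop:path_uncle} where you re-derive it.

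There is one step that fails as written. Your Step 2 claim that the loop computes the last vertex of $T_d$ needs the same kind of repair you already applied to $h$ versus $h'$. The loop guard is $h'\le d$, so a $w$ at depth $d$ with a non-back child is moved to depth $d+1$. Whenever the last vertex of $T_d$ lies at depth $d$ and has a child in $G$, the loop overshoots it and ends with $w\neq y$. $\NEXT$ then calls $\PATHUNCLE$, whose returning run follows the last-child path. Along that path $u$ is never updated, so it returns its default $(R(s),1)$ instead of $(s,0)$. Concretely, on a complete binary tree of depth $2$ with $d=1$, even under your reading $h'$, $\NEXT(R(s),1,1)$ returns $(R(s),1)$. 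With the guard changed to $h'<d$ your Step 2 argument goes through. The paper's proof asserts the same unchecked fact about this loop.

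In Step 3 there is a smaller gap. You say $h_w$ is the true height only along the correct guess path. But your key sentence, that any guess sequence reaching $x$ uses only non-back edges, needs the invariant along every non-aborting run. It does hold there, by induction. $h_w$ starts at $0$, and every move is certified by a back-edge test made with the correct height. On a non-aborting run that test answered correctly, so the move is a tree edge and raises the depth by exactly one.

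The same induction settles the stall case you left open. $x$ has depth $g$ and there are at most $g$ iterations, so a run ending at $x$ moved in every iteration. Its bits are therefore dictated by the unique tree path to $x$, and every other run aborts. This is exactly the depth-counting argument the paper uses in \cref{prop:path_uncle}.
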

\begin{proof}

The purpose of the procedure $\NEXT(y,h,d)$ is to return the pre-order successor of $y$ in the graph $G_d(s)$. Below, we show that this indeed is the case. The pre-order successor of $y$ can appear as a left child of $y$ if the left child is not a back edge. If it is a back edge, the pre-order successor will be the right child of $y$, provided $(y,R(y))$ is not a back edge. If both left and right child of $y$ are back edges, then the pre-order successor is not immediately clear as it needs to be computed. This is achieved using a call to $\PATHUNCLE$.

Suppose the next vertex in the pre-order traversal is the left child and $(y,L(y))$ is not a back edge. In this case, the procedure returns the left child and terminates. Suppose the next vertex in the pre-order traversal is the right child and $(y,R(y))$ is not a back edge. In this case, the procedure returns the right child and terminates. 

If the current vertex is the final vertex in the pre-order traversal of the graph $G_d(s)$, we return the source and terminate the procedure. Computing this final vertex is achieved in lines \ref{alg:next:11} to \ref{alg:next:22}. If $y$ is not the final vertex, we call $\PATHUNCLE$ to return the next vertex.

\paragraph{Unambiguity of $\NEXT$: }The procedure $\NEXT$ is unambiguous since it contains no nondeterministic statements except within the subroutines $\PATHUNCLE$, $\ISLBACK$, and $\ISRBACK$. The subroutine $\ISLBACK$ and $\ISRBACK$ are unambiguous in both ``yes'' and ``no'' instances by \cref{prop:backedge}. Path uncle unambiguously returns the correct uncle of $x$ along with its height by \cref{prop:path_uncle}.  Hence, $\NEXT$ is unambiguously returns a value or aborts.
\end{proof}

We remark that the only difference with the $\NEXT$ procedure appearing in Lange \cite{LAN97} is that in our case, (1) we need to check for back edges (\cref{lemma:char-of-simple-path}) and (2) the ``right most" vertex in $G_d(s)$ need not be the last vertex in the pre-order traversal; this needs to be computed.

\subsection{\texorpdfstring{$\PATHUNCLE$ Procedure (\cref{alg:path_uncle}): Correctness and Unambiguity}{PATHUNCLE: Correctness and Unambiguity}}\label{sec:pathuncle}

This procedure is executed as a subroutine of the previously defined algorithm $\NEXT$. Given a vertex $x$ and its height $g$, the algorithm $\PATHUNCLE(x, g)$ unambiguously returns 
the correct uncle of $x$ along with the uncle’s height. This is achieved by guessing a path from the root $s$ to $x$, at each step storing the latest encountered potential uncle $u$ (i.e., the right child of the current node) and its corresponding height $h_u$. To ensure guesses form a valid path, the algorithm avoids traversing edges that are identified as back edges, using the procedures $\ISLBACK$ and $\ISRBACK$. The algorithm verifies the correctness of its guesses by checking that the guessed path ends at $x$. If the path is invalid, the algorithm aborts. Otherwise, it returns the uncle that is farthest from $s$ along the simple path to $x$.

\begin{proposition}\label{prop:path_uncle}
	Given a vertex $x$ and its height $g$, the algorithm $\PATHUNCLE(x, g)$ unambiguously returns 
	the correct uncle of $x$ along with the uncle’s height or aborts.
\end{proposition}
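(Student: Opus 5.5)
The plan is to mirror the proof of \cref{prop:backedge}: establish \emph{correctness} when the guessed path reaches $x$, and \emph{unambiguity} by showing every non-aborting run corresponds to the same walk. Throughout I assume, as in \cref{prop:next}, that $G_g(s)$ is a simple path graph and $g$ is the true height of $x$. By \cref{lemma:char-of-simple-path}, the vertices of $G_g(s)$ together with the non-back edges then form a tree $T$ rooted at $s$. The DFS (pre-order) tree underlying $\NEXT$ is exactly $T$ restricted to depth $d$. In $T$ there is a unique root-to-$x$ path $s=w_0,w_1,\dots,w_g=x$, and it is the unique simple (hence shortest) path from $s$ to $x$.

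\textbf{Unambiguity.} By \cref{prop:backedge}, each call to $\ISLBACK(w,h_w)$ or $\ISRBACK(w,h_w)$ at a vertex $w$ whose height really is $h_w$ returns a unique answer on its non-aborting branches. I will prove by induction on $count$ that along any non-aborting run, $h_w$ equals the true height of $w$. The key invariant is that each step either stays at $w$ or moves along a non-back edge, which is a tree edge of $T$ and increases the height by one. Given this, the bit $b$ is the only remaining choice. The run ends at $x$ only if the final $w$ equals $x$, which requires exactly $g$ tree-edge steps.

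Here I notice a subtlety. A step with $b$ pointing at a back edge leaves $w$ unchanged while $count$ still increments, so the loop could end at $x$ too early. That would make $w$ land at height below $g$, and then the final $w$ could not equal $x$. The check ``$w==x$'' therefore forces all $g$ iterations to be genuine tree steps. Because tree paths are unique, the sequence of $b$'s is forced to follow $w_0,\dots,w_g$. The one exception is when a single child is a non-back edge and the other guess is a no-op; I must argue such a no-op costs an iteration and thus makes reaching $x$ impossible. Hence at most one sequence of guesses reaches the return statement, and all others abort at line~\ref{alg:path_uncle:19}.

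\textbf{Correctness.} On the unique successful run, $u$ is updated exactly when the path descends to a left child $L(w_i)$ while $R(w_i)$ is a non-back tree edge. In that case $u$ is set to $R(w_i)$ with height $i+1$. The initialization $u=R(s)$ covers the convention for $v=s$ in \cref{def:uncle}.

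In pre-order on $T$, once the subtree at $x$ has been exhausted, the next vertex is the right (non-back) child of the deepest ancestor $w_i$ such that the path went left at $w_i$ and that right child exists in $T$. This is exactly the last assignment made to $u$, and $h_u$ is its correct height. Depth truncation at $d$ does not affect this, since all ancestors have height below $g\le d$.

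\textbf{Main obstacle.} The hardest step is the unambiguity argument around the ``stay'' steps (when the guessed child is a back edge) and the lingering ambiguous aborts inside $\ISLBACK$ and $\ISRBACK$. I expect to handle this by counting iterations against heights, as above, and by noting that only unambiguity of returns is needed, not reach unambiguity.
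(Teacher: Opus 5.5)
Your proposal is correct and follows essentially the paper's argument. Along any non-aborting run $h_w$ stays equal to the true height, so the back-edge tests are reliable; any wrong turn or wasted (no-op) iteration leaves $w \neq x$ after at most $g$ iterations; and on the unique run following $P_{s,x}$, the final $u$ is the right child of the deepest ancestor where the path went left and whose right edge is not a back edge. The only difference is presentation: you read the uncle off directly from the standard pre-order-successor characterization, whereas the paper carries the invariant ``$u$ is the uncle of $w$'' through the loop via \cref{claim:structuncle}, which encodes the same fact.
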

\begin{proof}

	We argue the correctness of $\PATHUNCLE$. 
	
	Note that there is exactly one path $P_{s,x}$ from $s$ to $x$ as $x$ is reachable from $s$ and $G_d(s)$ is a simple path graph.
	Observe that the while loop (lines \ref{alg:path_uncle:3} to \ref{alg:path_uncle:16}) eventually terminates because, in each iteration, the value of $count$ is incremented irrespective of the guess taken.
	
	We show that the uncle $u$ must appears as a right child of one of the vertices in $P_{s,x}$.

	To argue this, we claim the following ``(1) Uncle of $w$ is $u$ with $h_u$ as its height (2) $h_w$ is the height of $w$" is maintained at the end of line \ref{alg:path_uncle:16}.  Item (2) is necessary for ensuring correctness of call to $\ISLBACK$ and $\ISRBACK$ functions.

	We argue the invariant by induction on $count$. For $count=1$, $w=s$ and the uncle $u$ of $w$ is $R(s)$. Moreover $h_u = 1$ and $h_w=0$ are the correct heights of $u$ and $w$ respectively. Hence, the base case holds.
	
	Moving to the induction case, let $count=c$ for some $c \ge 1$ and $w$ be an internal vertex reached. By induction hypothesis, the uncle of $w$ is $u$ and $h_u$ is its height and $h_w$ is the height of $w$. We make the following claim.

	\begin{proposition}\label{claim:structuncle}
		Let $G$ be a simple path graph and $w$ be a vertex such that $(w,R(w))$ is a back edge in $G$. Then, the uncle of $w$ is the right child of the unique vertex $v$ in the unique path $P_{s,w}$ such that 
		
		\begin{enumerate}
			\item[a.] $v$ has a right child that is not a back edge
			\item[b.] Every vertex $r$ from $v$ to $w$ in $P_{s,w}$ the edge $(r,R(r))$ is either an edge in $P_{s,w}$ or is a back edge. 
		\end{enumerate} 
	\end{proposition}

	In the induction step, $w$ gets updated to  $R(w)$ or  $L(w$) based on the nondeterministic choice. We now argue that for both the cases, $h_w$ gets updated correctly. This would prove (2) from the induction. In both the cases, provided $L(w)$ or $R(w)$ is \textit{not} a back edge, $w$ is updated appropriately, and $h_w$ is incremented by $1$. By induction, $h_w$ had the correct height of $w$. Hence, incrementing $h_w$ gives the correct height of the new $w$. 
	
	We now argue (1) by induction. Suppose that the nondeterministic choice was to the right (that is, $b=1$) and the right child is verified to be not a back edge. In this case, $u$ and $h_u$ stay the same. Hence, the uncle of $R(w)$ is the same as that of the uncle of $w$, which is $u$. 
	
	Suppose that the nondeterministic choice was to the left (that is, $b=0$) and the left child is verified to be not a back edge. Now, there are two cases. \textit{Case 1}: The right child of $w$ is a back edge. In this case, the uncle and its height are not updated. With $(w,R(w))$ being a back edge, from \cref{claim:structuncle}, we conclude that the  uncle of $L(w)$ is the same as that of uncle of $w$. By induction hypothesis, with the uncle of $w$ being $u$, the uncle of $L(w)$ is also $u$ and height of $u$ is the same as it is not updated. \textit{Case 2}: Right child of $w$ is not a back edge. In this case, in step \ref{alg:path_uncle:14}, the uncle is updated to $R(w)$. This is correct since the conditions of \cref{claim:structuncle} are satisfied. Since $u$ is updated to be the right child of $w$, the height of $u$ is one more than the height of $w$.
	
	Hence, by induction, irrespective of the nondeterministic choice, the uncle of $w$ is always $u$. For the unique nondeterministic choice reaching $x$, $u$ holds the uncle of $x$. In the case where the verification steps (back edge checks) fail, the $count$ gets incremented and becomes equal to $g$ prematurely. Since $g$ is the length of the shortest path to $x$, $w$ never becomes $x$, and the procedure aborts.

	\paragraph{Unambiguity of $\PATHUNCLE$: }Now, we show that the algorithm is unambiguous. Towards this, we argue that for the execution of the procedure in which all nondeterministic choices are correct, the procedure returns the correct uncle and its associated height. On the other hand, for any execution path involving an incorrect nondeterministic choice, the procedure aborts.
	
	We first argue the case when all the nondeterministic choices taken by the algorithm are correct. We show a unique $(u,h_u)$ is returned. This is true since for every $w$, there is a unique path $P_{s,w}$ from $s$ to $w$ in $G_d(s)$ and there is a unique vertex $v$ in the path $P_{s,w}$ whose right child is the uncle. Hence, the returned uncle is unique.
	
	Now consider the case in which the procedure makes an incorrect nondeterministic guess, which can occur in one of two ways: (1) the path of length $g$ chosen by the procedure starting from vertex $s$ does not terminate at vertex $x$, or (2) the path length is less than $g$ because $w$ reached a leaf vertex in $G_d(s)$. In both cases, the path terminates at vertex $w$, different from $x$ as $g$ is the length of the shortest path from $s$. Here, the procedure aborts in line \ref{alg:path_uncle:19}. These aborts need not be unique as there could be multiple nondeterministic choices that could end in the same $w$.
\end{proof}
We now give a proof of \cref{claim:structuncle} which was used in proving \cref{prop:path_uncle}.
\begin{proof}[Proof of \cref{claim:structuncle}]Observe that such a $v$ satisfying (a) and (b) always exists. This is because the right child of $s$ is a candidate uncle for any vertex in $P_{s,w}$. The uniqueness follows since the pre-order traversal is unique.
	
	Suppose, for contradiction, the right child of $v$ is not the correct uncle of $w$ but satisfies conditions (a) and (b). Observe that the uncle of $w$ cannot appear as a right child of any vertex in the path $P_{s,v}$. Hence, the uncle of $w$ must appear as a right child of a vertex $r$ between $v$ and $w$ in $P_{s,w}$. However, for any $r$, $(r, R(r))$ is neither an edge in $P_{s,w}$ nor a back edge. This violates (b).  
\end{proof}

\section{Recognition vs Reachability for Special Family of Graphs}\label{sec:rec_vs_reach}

Having established that reachability in simple path graphs is in $\UL \cap \co\UL$, we now turn to a broader question about the complexity of recognition versus reachability in various special family of graphs.

Let $\mathcal{G}$ be a family of graphs. Recall the recognition ($\GREC$) and reachability ($\GREACH$) problems from preliminaries.
Clearly, the reachability problem requires the recognition problem to be solved as well. This means $\GREC \leq_m^{log} \GREACH$, showing that solving reachability is as ``hard'' as recognition.
Since $\GREACH$ where $\mathcal{G}$ is reach unambiguous graphs is in  $\RUL$ \cite{LAN97}, the recognition problem is also solvable in $\RUL$ by the aforementioned observation. We now show that the reachability problem also reduces to the recognition problem for the family of reach unambiguous graphs.

We now prove \cref{Thm:Mangrove_reach}, which shows $\GREACH\leq_m^{\log} \GREC$ for reach unambiguous graphs.

\begin{proof}[Proof of \cref{Thm:Mangrove_reach}]

	The reach unambiguous machine $M'$ with oracle access to $L(M)$ for $\GREACH$ is as follows. Given $(G,s,t)$, first check if $G \in \GREC$ by making an oracle query to $L(M)$. If $G \notin \GREC$, then reject. Otherwise, $G \in \GREC$, and guess a length $\ell < \mid V(G)\mid $. Now, guess a path of length $\ell$ from $s$ to $t$. If the path ends in $x$ where $x\ne t$, abort with $(\ell, x )$. On the other hand, if $x=t$, then accept.

	We show that $L(M')=\GREACH$. Suppose $M'$ accepts $(G,s,t)$, then it must be that $G \in \GREC$ and there is a path from $s$ to $t$ in $G$. Hence, $(G,s,t) \in \GREACH$.  Suppose $(G,s,t) \in \GREACH$, then $G \in \GREC$ and there is a path from $s$ to $t$ in $G$. Hence the oracle call to $L(M)$ succeeds. Moreover, there is exactly one guess for length $\ell< |V(G)|$  and exactly one path starting from $s$ that ends in $t$. Hence, the machine $M'$ accepts $(G,s,t)$.
	
	We now argue why $M'$ is reach unambiguous.
	
	Consider the case where $\mathcal{G}$ is reach unambiguous graphs. Suppose $(G,s,t) \notin \GREACH$. Either $G \notin \GREC$ or there is no $s$ to $t$ path. Note that the case where there is more than one path from $s$ to $t$ is covered in the $G \notin \GREC$. Suppose $G \notin \GREC$. Then the oracle will return a ``no" answer and $M'$ rejects unambiguously. Now, suppose $G \in \GREC$ and there is no path from $s$ to $t$ in $G$. Then $(\ell,x)$ must be unique for all nondeterministic choices made by $M'$. Hence, $M'$ is reach unambiguous.
	
	Since $\RUL^\RUL=\RUL$ by \cref{Thm:RUL^RUL}, $L(M') = \GREACH $ is in $\RUL$. 
\end{proof}
\section{Summary and Open Problem}

The reachability problem restricted to reach unambiguous graphs is complete for the class $\RUL$ (under many-one logspace reductions)~\cite{LAN97}. In this work, we studied the complexity of reachability problems specifically for simple path graphs (a superclass of reach unambiguous graphs) and showed that the problem is in $\UL\cap\co\UL$. Since $\RUL \subseteq \UL \cap \co\UL$, as an open question, we ask whether our result can be improved to $\RUL$.

\medskip
\noindent\textbf{Open Problem.}
Is the complexity of the reachability problem for simple path graphs, $\SPREACH \in \RUL$?
\medskip

Such an improvement would give a simultaneous $\poly(n)$ time and $O(\log^2 n )$ space algorithm for this problem, placing it in $\SC^2$ and thereby improving \cite{KKR08}.

\bibliographystyle{alpha}
\bibliography{../../../ref}

\begin{thebibliography}{CLRS09}

\bibitem[AB09]{arora2009}
S.~Arora and B.~Barak.
\newblock {\em Computational Complexity: A Modern Approach}.
\newblock Cambridge University Press, 2009.

\bibitem[AE25]{AE24}
Simon Apers and Roman Edenhofer.
\newblock Directed st-connectivity with few paths is in quantum logspace.
\newblock In {\em Proceedings of the 40th Computational Complexity Conference},
  CCC '25. Schloss Dagstuhl--Leibniz-Zentrum fuer Informatik, 2025.

\bibitem[{\`A}G00]{ACR00}
Carme {\`A}lvarez and Raymond Greenlaw.
\newblock A compendium of problems complete for symmetric logarithmic space.
\newblock {\em Computational Complexity}, 9:123--145, 12 2000.

\bibitem[AL98]{AL99}
Eric Allender and K-J Lange.
\newblock $\mathsf{RUSPACE} (\log n)\subseteq \mathsf{DSPACE} (\log^2 n/\log
  \log n)$.
\newblock {\em Theory of Computing Systems}, 31(5):539--550, 1998.

\bibitem[All23]{E23}
Eric Allender.
\newblock Guest column: Parting thoughts and parting shots (read on for details
  on how to win valuable prizes!).
\newblock {\em SIGACT News}, 54(1):63–81, March 2023.

\bibitem[BJLR91]{BGL91}
Gerhard Buntrock, Birgit Jenner, Klaus-J\"{o}rn Lange, and Peter Rossmanith.
\newblock Unambiguity and fewness for logarithmic space.
\newblock In {\em Proceedings of the 8th International Symposium on
  Fundamentals of Computation Theory}, FCT '91, page 168–179, Berlin,
  Heidelberg, 1991. Springer-Verlag.

\bibitem[BTV09]{BTV09}
Chris Bourke, Raghunath Tewari, and N.~V. Vinodchandran.
\newblock Directed planar reachability is in unambiguous log-space.
\newblock {\em ACM Trans. Comput. Theory}, 1(1), February 2009.

\bibitem[CLRS09]{CLRS09}
T.H. Cormen, C.E. Leiserson, R.L. Rivest, and C.~Stein.
\newblock {\em Introduction to Algorithms, third edition}.
\newblock Computer science. MIT Press, 2009.

\bibitem[Coo79]{CookS79}
Stephen~A. Cook.
\newblock Deterministic cfl's are accepted simultaneously in polynomial time
  and log squared space.
\newblock In {\em Proceedings of the Eleventh Annual ACM Symposium on Theory of
  Computing}, STOC '79, page 338–345. Association for Computing Machinery,
  1979.

\bibitem[GST19]{GST19}
Chetan Gupta, Vimal~Raj Sharma, and Raghunath Tewari.
\newblock {Reachability in O(log n) Genus Graphs is in Unambiguous Logspace}.
\newblock In Rolf Niedermeier and Christophe Paul, editors, {\em 36th
  International Symposium on Theoretical Aspects of Computer Science (STACS
  2019)}, volume 126, pages 34:1--34:13, 2019.

\bibitem[KKR08]{KKR08}
Sampath Kannan, Sanjeev Khanna, and Sudeepa Roy.
\newblock {STCON in Directed Unique-Path Graphs}.
\newblock In Ramesh Hariharan, Madhavan Mukund, and V~Vinay, editors, {\em
  IARCS Annual Conference on Foundations of Software Technology and Theoretical
  Computer Science}, volume~2, pages 256--267, Dagstuhl, Germany, 2008.

\bibitem[Koz06]{kozen2006}
Dexter Kozen.
\newblock {\em Theory of Computation}.
\newblock Texts in Computer Science. Springer, 2006.

\bibitem[Lan97]{LAN97}
Klaus-J{\"o}rn Lange.
\newblock An unambiguous class possessing a complete set.
\newblock In {\em Annual Symposium on Theoretical Aspects of Computer Science},
  pages 339--350. Springer, 1997.

\bibitem[RA00]{RA2000}
Klaus Reinhardt and Eric Allender.
\newblock Making nondeterminism unambiguous.
\newblock {\em SIAM Journal on Computing}, 29(4):1118--1131, 2000.

\bibitem[Rei08]{Rein08}
Omer Reingold.
\newblock Undirected connectivity in log-space.
\newblock {\em Journal of the ACM (JACM)}, 55(4):1--24, 2008.

\bibitem[Wig92]{Wig}
Avi Wigderson.
\newblock The complexity of graph connectivity.
\newblock In {\em International Symposium on Mathematical Foundations of
  Computer Science}, pages 112--132. Springer, 1992.

\end{thebibliography}

\end{document}